\newcommand{\bB}{\boldsymbol{B}}
\newcommand{\bF}{\boldsymbol{F}}
\newcommand{\bs}{\boldsymbol{s}}
\newcommand{\bW}{\boldsymbol{W}}
\newcommand{\bx}{\boldsymbol{x}}
\newcommand{\bX}{\boldsymbol{X}}
\newcommand{\by}{\boldsymbol{y}}
\newcommand{\bY}{\boldsymbol{Y}}
\newcommand{\btheta}{\boldsymbol{\theta}}
\newcommand{\bTheta}{\boldsymbol{\Theta}}
\newcommand{\bSigma}{\boldsymbol{\Sigma}}
\newcommand{\blambda}{\boldsymbol{\lambda}}
\newcommand{\bxi}{\boldsymbol{\xi}}
\newcommand{\bzeta}{\boldsymbol{\zeta}}
\newcommand{\bM}{\mathbf{M}} 
\newcommand{\bH}{\mathbf{H}} 
\newcommand{\Xproc}{(\bX_t)_{t \geq 0}}
\newcommand{\Yproc}{(\bY_t)_{t \geq 1}}
\newcommand{\diffd}{\,\text{d}} 
\newcommand{\prodtT}{\prod_{t=1}^T}
\newcommand{\sumtT}{\sum_{t=1}^T}
\DeclareMathOperator{\Expec}{\mathbb{E}}
\DeclareMathOperator{\Plaw}{\mathbb{P}}
\DeclareMathOperator{\Cov}{\text{Cov}}
\DeclareMathOperator{\Var}{\text{Var}}
\newtheorem{lemma}{Lemma}
\newtheorem{theorem}{Theorem}
\title{Robust Estimation for Discrete-Time State Space Models}
\author{William H.\ Aeberhard$^{1,2}$, Eva Cantoni$^{3}$, Chris Field$^{2}$,\\ Hans R.\ K\"{u}nsch$^{4}$, Joanna Mills Flemming$^{2}$, and Ximing Xu$^{5}$\\
   \small$^{1}$Department of Mathematical Sciences, Stevens Institute of Technology\\
   \small$^{2}$Department of Mathematics and Statistics, Dalhousie University\\
   \small$^{3}$Research Center for Statistics and GSEM, University of Geneva\\
   \small$^{4}$Seminar f\"{u}r Statistik, ETH Zurich\\
   \small$^{5}$School of Statistics and Data Science, Nankai University}
\date{}
\begin{document}

\maketitle


\begin{abstract} 
State space models (SSMs) are now ubiquitous in many fields and increasingly complicated with observed and unobserved variables often interacting in non-linear fashions. The crucial task of validating model assumptions thus becomes difficult, particularly since some assumptions are formulated about unobserved states and thus cannot be checked with data. Motivated by the complex SSMs used for the assessment of fish stocks, we introduce a robust estimation method for SSMs. We prove the Fisher consistency of our estimator and propose an implementation based on automatic differentiation and the Laplace approximation of integrals which yields fast computations. Simulation studies demonstrate that our robust procedure performs well both with and without deviations from model assumptions. Applying it to the stock assessment model for pollock in the North Sea highlights the ability of our procedure to identify years with atypical observations.
\end{abstract}
\noindent \small{\textit{Keywords:} Bounded influence function; fish stock assessment; Laplace approximation; random effects; Template Model Builder}\normalfont\normalsize 

\section{Introduction}\label{s:intro}

Beyond the traditional engineering applications following the seminal work of \citet{kalman1960}, state space models (SSMs) are becoming increasingly popular in various fields such as speech recognition \citep{juang1991}, mathematical finance \citep[e.g.][]{geyer1999}, and animal movement modeling \citep{langrock2012}; see \citet{durbin2012} for overviews. This popularity is primarily due to the flexible hierarchical structure of such models, where both observed and unobserved variables are modeled through time.

SSMs are also becoming ubiquitous in fisheries science \citep{aeberhard2018reviewssmfish}, where the task of assessing the state of a fish stock plays a central role in providing science-based advice to policy makers and ultimately ensuring a sustainable management of fisheries \citep[see][]{hilborn1992}. Assessing a fish stock involves predicting some of its time-varying characteristics which cannot be directly measured, such as the age-stratified abundance of fish cohorts and the fishing mortality rate. The state space framework is relevant for distinguishing such unobserved states (akin to random effects or latent variables) from measured quantities, typically in the form of reported commercial catches and indices of relative fish abundance coming from governmental and independent surveys. The assessment of the stock of pollock in the North Sea as conducted by the International Council for the Exploration of the Sea (ICES) exemplifies this well as it relies, among other models, on a SSM \citep{ices2015}. This SSM, often referred to as a SAM (State space Assessment Model) in the fisheries science literature, combines various data sources to attempt to explain and predict the state of the stock. However, whether the intricate and rather strict assumptions of this SSM are satisfied in reality is debatable \citep[e.g.][]{maunder2014}. 
	
As in fisheries science, many modern applications formulate SSMs that are increasingly complicated, with observed and unobserved variables typically interacting in a non-linear fashion. The crucial task of validating model assumptions thus becomes difficult. This concern is all the more serious since the assumed dynamics of unobserved states cannot, by definition, be verified. Evidently, SSMs are a class of models for which robust methods \citep{hampel1986,huber2009} are necessary. Many robust fitting procedures have been proposed for both time series \citep[e.g.][]{kunsch1984,martin1986} and mixed effects models \citep[e.g.][]{welsh1997,moustaki2006}. SSMs can be seen as the intersection of these two classes of models, although the usual devices of robust $M$-estimation are not directly applicable here because of inherent high-dimensional integrals that need to be approximated in general. SSMs have received some attention regarding robust filtering and state prediction based on observations available at some intermediate time given estimates of model parameters that do not change over time \citep{ruckdeschel2010,calvet2015}. Regarding the estimation of model parameters, attention has been confined mainly to the maximum likelihood estimator (MLE). \citet{douc2012} showed that the MLE converges to some value under stationary and ergodic, possibly misspecified, observation processes but that this value only coincides with the true parameter of an assumed model in well-specified cases. To our knowledge, the only proposal for a robust estimator is in \citet{xu2015}, but their approach lacks generality and the estimator is shown to be consistent only as it converges to the MLE. In other words, this estimator cannot be robust and consistent at the same time. The purpose of the present paper is thus to propose a robust and consistent estimation method that is applicable to general SSMs. The main restriction here is that state transition densities must be given analytically. We will focus on the offline estimation of parameters, as the motivating fisheries data are collected yearly, and seek robustness mainly for reliable estimation and inference and for the detection of potentially deviating observations. We will not only address the theoretical properties of our robust estimator, but also propose an effective implementation and discuss computational aspects which are typically challenging for SSMs, even when robustness issues are not considered.

The paper proceeds as follow. Section~\ref{s:notation} presents our notation and defines the MLE for reference. In Section~\ref{s:rob} we detail our proposed robust estimation method, discuss a viable implementation, and study its theoretical properties. Section~\ref{s:sim} presents a simulation study based on the North Sea pollock data. The application of our robust method to the North Sea pollock stock assessment is detailed in Section~\ref{s:data}. Finally we conclude in Section~\ref{s:conc} and discuss open questions and challenges.

\section{Notation and maximum likelihood estimation}\label{s:notation}

A SSM consists of two stochastic processes, $\bX = \Xproc$ and $\bY = \Yproc$. The unobserved state sequence $\bX$ is a (first-order) Markov process taking values in a state space $\mathsf{X} \subset \mathbb{R}^q$. The sequence of observations $\bY$ takes values in the outcome space $\mathsf{Y} \subseteq \mathbb{R}^r$. The $\bY_t$'s are conditionally independent given the state sequence $\Xproc$, with the conditional distribution of $\bY_t$ depending only on $\bX_t$. The SSM is parameterized by the $p$-dimensional $\btheta$, taking values in a parameter space $\bTheta$. We refer to $\btheta$ as the vector of model parameters, in contrast to $\bX$ which is viewed as a sequence of ``random effects'' following a mixed effects terminology. The particular value $\btheta_0 \in \bTheta$ is considered the true value of the data-generating process. The conditional distribution of $\bX_t$ given $\bX_{t-1}$ is assumed time-homogeneous and admits a density function $f: \bTheta \times \mathsf{X} \times \mathsf{X} \to \mathbb{R}_{+}$ with respect to the Lebesgue measure, while the conditional distribution of $\bY_t$ given $\bX_t$ has density $g: \bTheta \times \mathsf{X} \times \mathsf{Y} \to \mathbb{R}_{+}$ with respect to the Lebesgue measure (instead of Lebesgue measures, other dominating measures could be used such as the counting measure in the case of countable spaces). The density of the initial state $\bX_0$ is denoted by $h: \bTheta \times \mathsf{X} \to \mathbb{R}_{+}$. We denote by $\Plaw_{\btheta}$ the marginal distribution of the observation process $\bY$ with parameter $\btheta$, with corresponding expectation $\Expec_{\btheta}$.

Let the subscript ${i:j}$ denote the inclusive range of indices between integers $i$ and $j$, $j \geq i$, so that e.g.\ $\bX_{1:t}=(\bX_1^\top,\bX_2^\top,\ldots,\bX_t^\top)^\top$, where $^\top$ denotes matrix transpose. The joint density of $\bX_{0:T}$ and $\bY_{1:T}$ is
\begin{align}\label{eq:jointlkhd}
p_{\btheta}(\bx_{0:T}, \by_{1:T}) &= h_{\btheta}(\bx_0) \prodtT f_{\btheta}(\bx_{t-1}, \bx_t) g_{\btheta}(\bx_t, \by_t).
\end{align}
Since $\bX$ is not observable, estimation and inference is based on the marginal log-likelihood of the observations
\begin{align}\label{eq:marglkhd}
\ell(\btheta;\by_{1:T}) &= \log \int_{\mathsf{X}^{(T+1)}} p_{\btheta}(\bx_{0:T}, \by_{1:T}) \diffd \bx_{0:T}.
\end{align}
The MLE of $\btheta$ for a given sample $\by_{1:T}$ is then defined as $\hat{\btheta}_T = \arg\max_{\btheta \in \bTheta} \ell(\btheta;\by_{1:T})$. It can alternatively be written as the solution in $\btheta$ to the maximum likelihood (ML) score equations $\partial \ell(\btheta;\by_{1:T})/\partial \btheta = \bs(\btheta;\by_{1:T}) = \boldsymbol{0}$. Interchanging the order of differentiation and integration, it follows that $\Expec_{\btheta}[\bs(\btheta;\bY_{1:T})] = \boldsymbol{0}$, i.e.\ the ML estimating equations are unbiased.

\section{Robust Estimation}\label{s:rob}

\subsection{Definition of robust estimator}\label{ss:robdef}

The typical approach for constructing robust $M$-estimators is by adaptively downweighting observations in the estimating equations corresponding to an initial (consistent) estimator, such as the score equations for the MLE \citep{field1994}. This approach cannot be applied in the state space framework because estimating equations are not available due to the high-dimensional integrals in (\ref{eq:marglkhd}) that need to be approximated in general. Therefore, we propose an approach that directly ``robustifies'' the joint likelihood of $\bX$ and $\bY$.

Related to the $\Psi$-divergence approach of \citet{eguchi2001}, and in some sense to the idea of bounding deviance residuals of \citet{bianco1996}, we introduce a robustified likelihood by applying a specified function $\rho:\mathbb{R} \to \mathbb{R}$, indexed by a tuning constant $c$, on each log-likelihood contribution of $\bX_{0:T}$ and $\bY_{1:T}$:
\begin{align}\label{eq:robjointlkhd}
\tilde{p}_{\btheta}(\bx_{0:T}, \by_{1:T}) &= \exp\left[\rho_c\big\{\log h_{\btheta}(\bx_0)\big\} + \sumtT \rho_c\big\{\log f_{\btheta}(\bx_{t-1}, \bx_t)\big\} + \rho_c\big\{\log g_{\btheta}(\bx_t, \by_t)\big\}\right],
\end{align}
where $c$ may be set to a different value for each of $f_{\btheta}$, $g_{\btheta}$ and $h_{\btheta}$. Throughout, we use the tilde diacritic ($\,\tilde{}\,$) to indicate a robustified function or quantity. For any given $c < \infty$, $\rho_c$ is required to satisfy some conditions which are spelled out in Assumption~\ref{A:rho} in Section~\ref{ss:theory}. Notably, $\rho_c$ is required to be monotonically increasing and have a continuous first derivative bounded within $[0,1]$, see Figure~\ref{fig:rho} for an illustration. The purpose of $\rho$, given a value for $\btheta$, is to diminish the contribution of observations and states whose likelihood is low while leaving the contributions with large likelihood essentially unchanged. The first derivative $\rho'_c(z) = \partial \rho_c(z)/\partial z$ can then be interpreted as a multiplicative weight at the score level, which draws a useful parallel with the usual device of bounding the score function in the construction of robust $M$-estimators for other classes of models \citep[see e.g.][Chapter~3.2]{huber2009}. We present various $\rho$ functions in Section~\ref{ss:rho}. Note that $\tilde{p}_{\btheta}$ addresses not only outlying observations in $\bY$ but also deviations in the dynamics of the states sequence (e.g.\ structural breaks).

By integrating over the unobserved states, the robustified marginal log-likelihood is
\begin{align}\label{eq:robmarglkhd}
\tilde{\ell}(\btheta;\by_{1:T}) &= \log \int_{\mathsf{X}^{(T+1)}} \tilde{p}_{\btheta}(\bx_{0:T}, \by_{1:T}) \diffd \bx_{0:T}.
\end{align}
The sequence of solutions in $\btheta$ to $\partial \tilde{\ell}(\btheta;\by_{1:T}) / \partial \btheta = \tilde{\bs}(\btheta;\by_{1:T}) = \boldsymbol{0}$ does not converge in general to the true $\btheta_0$ as $T \rightarrow \infty$ if the data exactly come from the assumed SSM. We introduce the robust estimator $\tilde{\btheta}_T$ as the solution in $\btheta$ to the corrected robustified score equations $\tilde{\bs}^\star(\btheta;\by_{1:T}) = \tilde{\bs}(\btheta;\by_{1:T}) - b_T(\btheta) = \boldsymbol{0}$. The correction term
\begin{align}\label{eq:defbprime}
b_T(\btheta) &= \Expec_{\btheta} \left[ \tilde{\bs}(\btheta;\by_{1:T}) \right]
\end{align}
guarantees by construction unbiased estimating equations under the assumed model (and thus Fisher consistency, see Theorem~\ref{th:consistency} below). The definition in (\ref{eq:defbprime}) is the typical representation found in estimating equations for general robust $M$-estimators, and one can check that $b_T(\btheta)$ goes to zero as the tuning constant $c$ goes to infinity. By taking $\lim_{c \rightarrow \infty}$ in (\ref{eq:robmarglkhd}), $\tilde{\ell}(\btheta;\by_{1:T})$ converges to $\ell(\btheta;\by_{1:T})$ and $\tilde{\bs}^\star(\btheta;\by_{1:T})$ converges to the ML score, the latter which has expectation zero under $\Plaw_{\btheta}$ for all $\btheta \in \bTheta$.

\subsection{Choice and tuning of $\rho$ function}\label{ss:rho}

\citet{eguchi2001} proposed a log-logistic $\rho$ function defined as $\rho_c(z) =  \log\left\{ \frac{1+\exp(z+c)}{1+\exp(c)}\right\}$. As $\lim_{z \rightarrow -\infty} \rho_c(z) = -\log((1+\exp(c)))$, for Assumption~\ref{A:rho}\ref{rhocond3} to hold (see Section~\ref{ss:theory}) both $\mathsf{X}$ and $\mathsf{Y}$ should be assumed compact. As an alternative, we introduce the following $\rho$ function which only requires $\mathsf{X}$ to be compact:
\begin{align*}
\rho_c(z) &=  \left\{\begin{array}{ll} c \sinh^{-1}\big((z+c)/c\big) -c & \text{if}\; z < -c\\ z & \text{if}\; z \geq -c, \end{array}\right.
\end{align*}
where $\sinh^{-1}(u) = \log\left(u + \sqrt{1+u^2}\right)$ is the inverse hyperbolic sine function. The first derivative is $\rho'_c(z) = \Big(1+\big((z+c)/c\big)^2\Big)^{-1/2}$ if $z< -c$ and $\rho'_c(z) = 1$ otherwise.
We will refer to this $\rho$ function as the smooth semi-Huber (SSH) function, since its derivative is based on the left-hand side of a smooth version of the Huber weight $\psi_c(u)/u$, where $\psi_c(u) = \max(-c,\min(c,u))$ is the Huber $\psi$ function \citep{huber1964}. The smooth approximation $(1+(u/c)^2)^{-1/2} \approx \psi_c(u)/u$ due to \citet{charbonnier1994} is preferred here to the original Huber weight as the latter is not differentiable everywhere. The effect of the tuning constant $c$ on the SSH $\rho$ function and its derivative is illustrated in Figure~\ref{fig:rho}. On the one hand, a smaller value of $c$ diminishes the contribution of the log-likelihood earlier, yielding more robustness in terms of a smaller maximum asymptotic bias in the presence of contamination. On the other hand, $c \rightarrow \infty$ makes $\rho$ converge to the identity function and the weight represented by $\rho'$ converges to one over the entire real line.

\begin{figure}[ht!]
\begin{center}
\includegraphics[width=0.9\columnwidth]{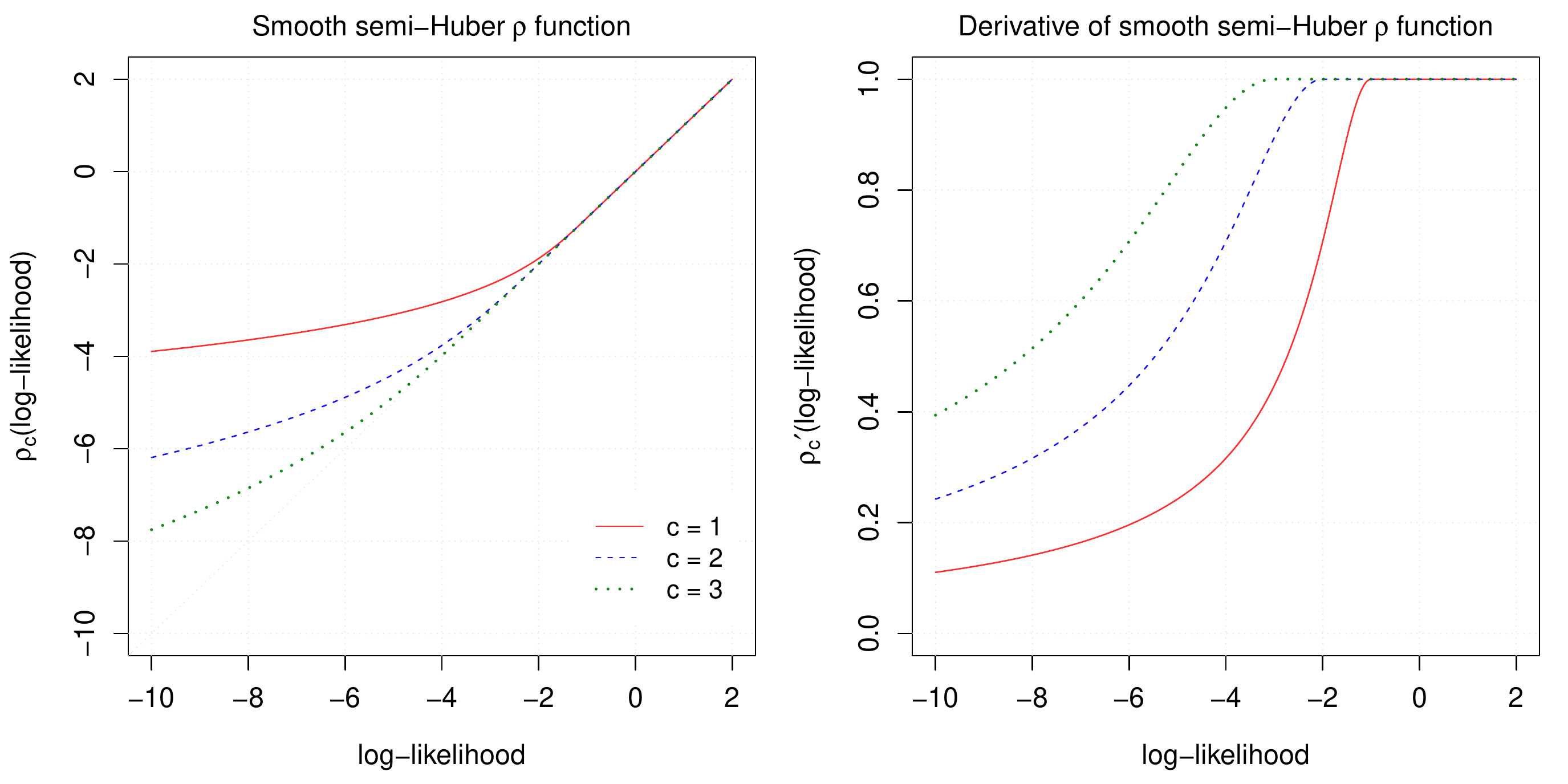}
\caption{SSH $\rho$ function (left panel) and first derivative (right panel) for different values of the tuning constant $c$.
{\label{fig:rho}}%
}
\end{center}
\end{figure}

Among other methods \citep[e.g.][]{rieder2008}, selecting a value for $c$ is typically done so as to achieve a certain efficiency at the assumed model with respect to the MLE, where the efficiency is usually based on asymptotic variances. Rather than relying on asymptotics, we suggest using Monte Carlo simulation instead. The procedure can be summarized as follows: i) For a given value of $c$, compute the robust estimator on the original data, denoted as $\tilde{\btheta}^{\text{original}}$; ii) Use this robust estimate and the design of the original data to simulate $B$ independent samples from the assumed model; iii) On each simulated sample, compute the MLE $\hat{\btheta}$ and the robust estimator $\tilde{\btheta}$; iv) For both estimators, compute the element-wise median absolute deviation (MAD) about $\tilde{\btheta}^{\text{original}}$ over the $B$ replicates, where the MAD for the $j$th element is defined as $\text{median}_{b=1,\ldots,B}\{|\btheta^b_j-\tilde{\btheta}^{\text{original}}_j|\}$ with $\btheta^b$ representing here either estimator on the $b$th sample; v) These two MADs are $p$-vectors and dividing the ML-based one by the robust one element-wise yields empirical efficiencies corresponding to $c$. These five steps are then repeated, by trial-and-error, to find the appropriate value of $c$ for some predefined target efficiency, say 90\%. We advocate using a MAD instead of a more traditional empirical mean squared error (MSE) as the former is a highly robust measure of dispersion and is thus more stable if some estimates among the $B$ replicates (both ML and robust ones) do not converge properly. Also, computing element-wise empirical efficiencies allows to choose different $c$ values among the various likelihood contributions of $f_{\btheta}$, $g_{\btheta}$ and $h_{\btheta}$. This is necessary since the impact of $c$ is directly linked to the magnitude of the likelihood function which depends on scale parameters in $\btheta$.

\subsection{Proposed implementation and computational considerations}\label{ss:implement}

We propose here an effective implementation of the robust estimator. Solving the corrected robustified score equations $\tilde{\bs}^\star(\btheta;\by_{1:T}) = \boldsymbol{0}$ raises two main computational challenges: we need to evaluate the robustified marginal likelihood in (\ref{eq:robmarglkhd}) and the expectation defining $b_T(\btheta)$ in (\ref{eq:defbprime}). Various methods for approximating integrals exist. The choice among simulation-based procedures \citep[e.g.][]{doucet2001book,fearnhead2018} and direct approximations of the integrand \citep[e.g.][]{davis2007} relies on trade-offs between computing time, accuracy, generality, and ease of programming.

Regarding the first integral, we note that even in the case of the MLE, the marginal likelihood in (\ref{eq:marglkhd}) only admits a closed form in very specific cases, namely for a finite state space $\mathsf{X}$ (often referred to as hidden Markov models) and for the linear-Gaussian SSM where all marginals are Gaussian themselves. With the addition of the $\rho$ function, an approximation is indeed necessary and we choose to use Laplace's method \citep[e.g.][Chapter~6.2]{barndorff1989}. It can be summarized as a second-order Taylor expansion of $\log \tilde{p}_{\btheta}(\bx_{0:T}, \by_{1:T})$ about a global maximum for $\bx_{0:T}$, yielding the density kernel of a multivariate Gaussian distribution which can then be integrated following a normalization involving the Hessian determinant. The uncorrected robustified marginal log-likelihood is thus approximated as
\begin{align*}
\tilde{\ell}(\btheta;\by_{1:T}) &\approxeq \tilde{\ell}_{\text{LA}}(\btheta;\by_{1:T}) = \log\tilde{p}_{\btheta}(\hat{\bx}_{0:T}, \by_{1:T}) + \frac{q(T+1)}{2}\log(2\pi) - \frac{1}{2}\log\left\vert - \bH(\hat{\bx}_{0:T},\btheta) \right\vert,
\end{align*}
where the ``LA'' subscript identifies Laplace-approximated quantities, $|\cdot|$ denotes here the matrix determinant,
\begin{align*}
\hat{\bx}_{0:T} &= \hat{\bx}_{0:T}(\by_{1:T},\btheta) = \arg\max_{\bx_{0:T} \in \mathsf{X}^{(T+1)}} \log\tilde{p}_{\btheta}(\bx_{0:T}, \by_{1:T}),
\end{align*}
and $\bH(\bx_{0:T},\btheta) = \partial^2 \log\tilde{p}_{\btheta}(\bx_{0:T}, \by_{1:T})/\partial\bx_{0:T} \partial\bx_{0:T}^\top$. For such an approximation to be valid, this maximum over $\bx_{0:T}$ must exist and be global, or at least dominate others if not unique. The Laplace approximation involves a low computational cost which is crucial given that $\tilde{\ell}(\btheta;\by_{1:T})$ needs to be evaluated many times when optimizing over $\btheta$. To tackle the dependence of $\hat{\bx}_{0:T} = \hat{\bx}_{0:T}(\by_{1:T},\btheta)$ upon $\btheta$ in such optimizations, we use automatic differentiation \citep[AD;][]{griewank2008,skaug2006}. AD allows the numerical evaluation of derivatives without the need for symbolic expressions by propagating derivatives of simple operations through the chain rule. This is not a numerical approximation, the evaluation is exact up to machine precision yet does not require analytical expressions for the gradient. This can be considered state-of-the-art as it is the general device behind the back-propagation algorithm used for gradient descent in the training of artificial neural networks \citep[see e.g.][Chapter~6.5]{goodfellow2016}. The combination of the Laplace approximation and AD is at the core of the R package Template Model Builder \citep[TMB;][]{kristensen2016}, on which our implementation relies. TMB is highly flexible as the user is only required to code the negative log-likelihood $-\log p_{\btheta}(\bx_{0:T}, \by_{1:T})$ or its robustified version. As a useful byproduct, the Laplace approximation provides unobserved state predictions: $\hat{\bx}_{0:T}$ is equivalent to the (posterior) mode of the robustified conditional distribution of $\bX_{0:T}$ given $\bY_{1:T}$ and a value for $\btheta$. The Laplace approximation is increasingly used for evaluating marginal likelihoods in the SSM context, see e.g.\ the recent successful applications in \citet{auger2016}, \citet{thygesen2017}, and \citet{yin2019scallops}. Although its accuracy is yet to be comprehensively studied in this context, we note that simulation-based methods for approximating integrals with arbitrarily good accuracy suffer from other problems such as the degeneracy of particle filters \citep{kantas2015}.

Regarding the second integral, we approximate the expectation defining $b_T(\btheta)$ by Monte Carlo simulation. This is feasible thanks to the efficient implementation in TMB, as both $\tilde{\ell}_{\text{LA}}(\btheta;\by_{1:T})$ and $\tilde{\bs}_{\text{LA}}(\btheta;\by_{1:T}) = \partial\tilde{\ell}_{\text{LA}}(\btheta;\by_{1:T})/\partial\btheta$ can be evaluated instantly. Thus, simulating data at the assumed SSM and evaluating the gradient is computationally inexpensive and requires little additional programming. For a given value of $\btheta$ we generate $B$ Monte Carlo replicates, each denoted $(\bx_{0:T}^b,\by_{1:T}^b)$ for $b=1,\ldots,B$, and obtain
\begin{align}\label{eq:MCapproxFcct}
b_T(\btheta) &= \Expec_{\btheta} \left[ \tilde{\bs}(\btheta;\by_{1:T}) \right] \approxeq \frac{1}{B}\sum_{b=1}^B \tilde{\bs}_{\text{LA}}(\btheta;\by_{1:T}^b).
\end{align}
Our experience is that $B$ needs to be quite large to ensure the approximated $b_T(\btheta)$ does not introduce more bias than there was prior to correction, in particular when the robustness tuning constant $c$ is selected to achieve high empirical efficiency. Based on preliminary simulations with various designs (not presented here), $B=1000$ is probably a minimum while $B=5000$ would be more on the safe side.

We can now present our proposed algorithm for the computation of $\tilde{\btheta}_T$.
\begin{enumerate}[leftmargin=2.4\parindent] 
	\item[Step~1:]{
	Compute the uncorrected robust estimator $\tilde{\btheta}_T^{[1]}$ by maximizing the (uncorrected) Laplace-approximated robustified marginal log-likelihood:
	$$\tilde{\btheta}_T^{[1]} = \arg\max_{\btheta \in \bTheta} \tilde{\ell}_{\text{LA}}(\btheta;\by_{1:T}).$$
	}
	\item[Step~2:]{
	Evaluate the corrected score of this initial estimate:
	$$\tilde{\bs}^\star(\tilde{\btheta}_T^{[1]};\by_{1:T}) = \tilde{\bs}(\tilde{\btheta}_T^{[1]};\by_{1:T}) - b_T(\tilde{\btheta}_T^{[1]}),$$ where $b_T(\tilde{\btheta}_T^{[1]})$ is approximated by Monte Carlo following (\ref{eq:MCapproxFcct}).
	}
	\item[Step~3:]{
	Correct the estimator with a single Newton-Raphson (NR) step:
	$$\tilde{\btheta}_T = \tilde{\btheta}_T^{[1]} - \left.\left(\frac{\partial \tilde{\bs}(\btheta;\by_{1:T})}{\partial \btheta}\right\vert_{\btheta=\tilde{\btheta}_T^{[1]}} \right)^{-1}\tilde{\bs}^\star(\tilde{\btheta}_T^{[1]};\by_{1:T}).$$
	}
\end{enumerate}
A few aspects of this algorithm warrant some discussion. First, as a unique maximum is not guaranteed in general in Step~1, we suggest starting from many (random) values and retaining the converged estimate which yields the largest $\tilde{\ell}_{\text{LA}}(\btheta;\by_{1:T})$ value. Also, if the target tuning constant $c$ is low and relatively few outliers are expected in the data, it may help convergence to start with a larger $c$ value and decrease it sequentially, each time computing the corresponding uncorrected robust estimator and using it as a starting point in the next iteration. Ideally a highly robust starting point would be preferable, but initializing with the MLE in such a sequence regularly yielded reasonable outcomes in preliminary simulations, as it only has an indirect impact on the final robust estimate. Second, if the target efficiency (and thus $c$ value) is quite high, our experience is that the bias of the uncorrected $\tilde{\btheta}_T^{[1]}$ may be close to negligible for all practical purposes and thus may be useful on its own given its fast computation. The simulation results in Appendix~S4, where the robust estimator is tuned for 90\% efficiency show such a pattern for many parameters, see Section~\ref{s:sim}. Third, in the NR step we make use of the uncorrected Hessian matrix $\partial \tilde{\bs}(\btheta;\by_{1:T})/\partial \btheta$, which is approximated by finite differences. We have tested correcting it by Monte Carlo in a similar fashion as in (\ref{eq:MCapproxFcct}) but found little improvement in $\tilde{\btheta}$. Finally, Step~2 and 3 may be iterated. A very large $B$ (and fixing the seed value in Step~2) is necessary to ensure the iterated procedure converges, but this makes the whole excessively time-consuming. Our experience is that a single step seems to effectively offset any noticeable bias, see Section~\ref{ss:simres}.

\subsection{Theoretical properties}\label{ss:theory}

Here we study asymptotic properties (consistency and robustness) of our proposed robust estimator $\tilde{\btheta}_T$. For this, we approximate the uncorrected robust score $\tilde{\bs}(\btheta;\by_{1:T})$ by a sum of individual scores which form a stationary process. In the limit $T \rightarrow \infty$, the robust estimator corresponds then to a functional on the space of stationary and ergodic processes, see Theorem~\ref{th:consistency} below.

We require the following assumptions for our main result.
\begin{enumerate}[label=(A\arabic{*})]
\item{\label{A:rho} 
For all $c < \infty$, the $\rho_c$ function is convex, bijectively increasing and twice continuously differentiable over $\mathbb{R}$, and satisfies the following:
\begin{enumerate}[label=(\roman{*})]
  \item{\label{rhocond1} $\displaystyle \lim_{z \rightarrow -\infty} \rho_c(z)/z = 0$,
  }
  \item{\label{rhocond2} $\displaystyle \lim_{z \rightarrow +\infty} \rho_c(z)/z = 1$,
  }
  \item{\label{rhocond3} $\displaystyle \int_{\mathsf{Y}^T}\int_{\mathsf{X}^{(T+1)}} \tilde{p}_{\btheta}(\bx_{0:T}, \by_{1:T}) \diffd \bx_{0:T} \diffd \by_{1:T} < \infty$. 
  }
\end{enumerate}
}
\item{\label{A:compact} 
The state space $\mathsf{X}$ is compact.
}
\item{\label{A:poscont} 
The unobserved state density satisfies
\begin{equation*}
\inf_{\btheta \in \bTheta}\inf_{\bx, \bx' \in \mathsf{X}} f_{\btheta}(\bx,\bx') = \sigma_{-} > 0 \text{ and } \sup_{\btheta \in \bTheta}\sup_{\bx, \bx' \in \mathsf{X}} f_{\btheta}(\bx,\bx') = \sigma_{+} < \infty.
\end{equation*}
The observation density satisfies $0 < g_{\theta}(\bx,\by) < \infty$ for all $\bx \in \mathsf{X}$, $\by \in \mathsf{Y}$ and $\btheta \in \bTheta$, as well as
\begin{equation*}
\sup_{\btheta \in \bTheta}\sup_{\bx \in \mathsf{X}}\sup_{\by \in \mathsf{Y}} |\partial/\partial \btheta \log g_{\theta}(\bx,\by)| < \infty.
\end{equation*}
Also, both mappings $\btheta \rightarrow g_{\theta}(\bx,\by)$ and $\btheta \rightarrow f_{\theta}(\bx,\bx')$ are twice continuously differentiable with respect to $\btheta$.
}
\item{\label{A:compacttheta} 
The parameter space $\bTheta$ is compact.
}
\end{enumerate}

Assumptions~\ref{A:rho}\ref{rhocond1}--\ref{rhocond2} state that the weight represented by $\rho'$ is bounded within $[0,1]$, which is the main requirement for the robustness property. Assumption~\ref{A:rho}\ref{rhocond3} is not formally needed for robustness but ensures the normalization constant can be neglected by viewing $\tilde{p}_{\btheta}$ as a probability density function, see Appendix~S1 of the Web-based Supporting Information. The compactness of $\mathsf{X}$ in \ref{A:compact} in conjunction with \ref{A:poscont} ensures the existence of the invariant stationary distribution of $\bX$ while avoiding explicit positive recurrence requirements. We note that these are not as restrictive as requiring $\mathsf{X}$ to be a finite set, a special case of SSMs often referred to as hidden Markov models. The compactness of $\mathsf{X}$ may be relaxed following \citet{douc2011}, although at the cost of additional technical assumptions which go beyond the scope of the paper. In \ref{A:poscont}, the requirements on $f_{\theta}$ and $g_{\theta}$ are fairly standard \citep[e.g.][]{douc2004}. The uniform boundedness of $f_{\theta}$ implies that the state space $\mathsf{X}$ is 1-small thanks to \ref{A:compact} \citep[see][Section~5.2]{meyn1993} and thus that the process is uniformly ergodic by Theorem~16.0.2 of \citet{meyn1993}. This means that, regardless of the starting point, the distribution of $\bX$ converges in the total variation norm to its invariant distribution at a uniform geometric rate. As a corollary, the joint process $(\bX,\bY)$ is also uniformly ergodic. In order to relax the uniform boundedness in \ref{A:poscont}, a possible route would be to explicitly assume that $\bX$ is Harris positive recurrent to obtain joint uniform ergodicity, see Lemma~1 of \citet{douc2011}. Finally, we note that the compactness of $\bTheta$ is standard for studying the estimation of SSMs \citep{douc2004,douc2011,douc2012} and cannot be easily relaxed.

Based on these assumptions, the following theorem establishes that the (normalized) corrected robustified score equations are asymptotically unbiased, with proof given in Appendix~S1 of the Web-based Supporting Information.

\begin{theorem}\label{th:consistency} 
Under \textnormal{\ref{A:rho}}--\textnormal{\ref{A:compacttheta}}, there exists a bounded measurable function $\bzeta: \bTheta \times \mathsf{Y}^{\mathbb{Z}} \rightarrow \mathbb{R}^p$, where $\mathbb{Z}$ denotes the set of all integers, such that
\begin{equation*}
\sup \lim_{T\rightarrow\infty}T^{-1} \left| \tilde{\bs}(\btheta;\by_{1:T}) - \sum_{t=1}^T \bzeta(\btheta; \by_t,\by_{t\pm 1},\by_{t \pm 2}, \ldots) \right| = \boldsymbol{0},
\end{equation*}
where the supremum is taken over all $\btheta \in \bTheta$ and over all doubly-infinite sequences $\by_{-\infty:+\infty}$. In particular, for any stationary and ergodic process $(\bY_t)_{t \in \mathbb{Z}}$, $T^{-1} \tilde{\bs}(\btheta;\by_{1:T})$ converges almost surely and in expectation to $\Expec[\bzeta(\btheta;\bY_0,\bY_{\pm 1},\bY_{\pm 2},\ldots)]$. If we set
\begin{equation*}
\bzeta^\star(\btheta;\bY_0,\bY_{\pm 1},\bY_{\pm 2},\ldots) = \bzeta(\btheta;\bY_0,\bY_{\pm 1},\bY_{\pm 2},\ldots) - \lim_{T\rightarrow\infty}T^{-1}b_T(\btheta),
\end{equation*}
then the statistical functional $S^\star(\Plaw)$ defined as the solution in $\btheta$ to
\begin{equation*}
\int \bzeta^\star(\btheta;\by_0,\by_{\pm 1},\by_{\pm 2},\ldots) \, \Plaw(\text{d} \by) = \boldsymbol{0}
\end{equation*}
satisfies by construction the condition for Fisher consistency, namely that $S^\star(\Plaw_{\btheta}) = \btheta$ for all $\btheta \in \bTheta$.
\end{theorem}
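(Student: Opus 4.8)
The plan is to show that $\tilde{\bs}$ is, up to a remainder that is uniformly $o(T)$, an additive functional of a stationary process, then to apply the ergodic theorem, and finally to read off Fisher consistency from the definition of the correction term. First I would rewrite the uncorrected robust score as a smoothing expectation. Differentiating (\ref{eq:robmarglkhd}) and using the Fisher-type identity gives
\begin{equation*}
\tilde{\bs}(\btheta;\by_{1:T}) = \Expec_{\tilde{\pi}_{\btheta,T}}\!\left[ \partial_{\btheta} \log \tilde{p}_{\btheta}(\bX_{0:T},\by_{1:T}) \right],
\end{equation*}
where $\tilde{\pi}_{\btheta,T}(\bx_{0:T}) \propto \tilde{p}_{\btheta}(\bx_{0:T},\by_{1:T})$ is the normalised robustified smoothing distribution. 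By (\ref{eq:robjointlkhd}) the integrand is additive in time: its $t$-th term is $\rho'_c\{\log f_{\btheta}\}\,\partial_{\btheta}\log f_{\btheta}(\bx_{t-1},\bx_t) + \rho'_c\{\log g_{\btheta}\}\,\partial_{\btheta}\log g_{\btheta}(\bx_t,\by_t)$ and depends on at most two adjacent states, with the initial-state term a single $O(1)$ boundary contribution. Using $\rho'_c \in [0,1]$ together with \ref{A:poscont} (so $\log f_{\btheta}$ is bounded and $f_{\btheta}$ is bounded below), \ref{A:compact} and \ref{A:compacttheta}, these per-step terms are uniformly bounded in $\btheta$, in the states, and in $\by$.

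Second, I would exploit that $\tilde{p}_{\btheta}$ has the structure of an (unnormalised) hidden Markov model with modified transition $\tilde{f}_{\btheta}=\exp[\rho_c\{\log f_{\btheta}\}]$ and emission $\tilde{g}_{\btheta}=\exp[\rho_c\{\log g_{\btheta}\}]$. Since \ref{A:poscont} confines $\log f_{\btheta}$ to $[\log\sigma_{-},\log\sigma_{+}]$ and $\rho_c$ is continuous and increasing, $\tilde{f}_{\btheta}$ is again bounded above and below on $\mathsf{X}\times\mathsf{X}$, uniformly in $\btheta$. This two-sided bound is a Doeblin-type minorisation for the modified kernel and yields a uniform forgetting property of the corresponding smoother: the law of $(\bX_{t-1},\bX_t)$ under $\tilde{\pi}_{\btheta,T}$ is insensitive to the horizon, to the boundary conditions, and to observations far from $t$, up to an error decaying geometrically in the distance, uniformly in $\btheta$ and in the observation sequence. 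This is the standard coupling argument for the forward and backward smoothing recursions, now carried out for $\tilde{f}_{\btheta}$.

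Third, using this forgetting I would construct $\bzeta$ as the limiting single-site contribution for a doubly-infinite sequence: fixing $\by_{-\infty:+\infty}$, the smoothing expectation of the time-$0$ per-step term computed over the window $\by_{-m:m}$ converges as $m\to\infty$ to a limit free of the boundary conditions and of distant observations; I set $\bzeta(\btheta;\by_0,\by_{\pm1},\ldots)$ to be this limit, so by shift-invariance the time-$t$ limit is $\bzeta(\btheta;\by_t,\by_{t\pm1},\ldots)$. Boundedness of $\bzeta$ is inherited from the per-step bound and measurability from its construction as a pointwise limit. The difference $\tilde{\bs}(\btheta;\by_{1:T})-\sum_{t=1}^T \bzeta(\btheta;\by_t,\by_{t\pm1},\ldots)$ collects at each $t$ the gap between the finite-horizon smoother and its doubly-infinite limit, and by the geometric forgetting each gap is $O(\lambda^{\min(t,T-t)})$ for some $\lambda<1$ uniform in $\btheta$ and the sequence; summing gives an $O(1)$ total, so dividing by $T$ yields the claimed uniform limit. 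For a stationary ergodic $(\bY_t)_{t\in\mathbb{Z}}$, the sequence $\bzeta(\btheta;\bY_t,\bY_{t\pm1},\ldots)$ is bounded, stationary and ergodic (a measurable function of the shift), so Birkhoff's theorem gives $T^{-1}\sum_{t=1}^T \bzeta \to \Expec[\bzeta(\btheta;\bY_0,\bY_{\pm1},\ldots)]$ almost surely, and boundedness upgrades this to convergence in expectation; combined with the remainder bound, $T^{-1}\tilde{\bs}(\btheta;\bY_{1:T})$ converges to the same limit a.s.\ and in expectation.

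Finally, Fisher consistency is immediate by construction. Taking expectations in the convergence just established, $\lim_{T\to\infty}T^{-1}b_T(\btheta)=\lim_{T\to\infty}T^{-1}\Expec_{\btheta}[\tilde{\bs}(\btheta;\bY_{1:T})]=\Expec_{\btheta}[\bzeta(\btheta;\bY_0,\bY_{\pm1},\ldots)]$ by (\ref{eq:defbprime}), so the limiting corrected score is $\int \bzeta^\star(\btheta;\by)\,\Plaw(\text{d}\by)=\Expec_{\Plaw}[\bzeta(\btheta;\cdot)]-\Expec_{\Plaw_{\btheta}}[\bzeta(\btheta;\cdot)]$, which vanishes whenever $\Plaw=\Plaw_{\btheta}$; hence $S^\star(\Plaw_{\btheta})=\btheta$ for every $\btheta\in\bTheta$. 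The main obstacle is the third step: establishing the uniform-in-$\btheta$-and-in-observations geometric forgetting of the robustified smoother, and thereby both the existence of $\bzeta$ and the $O(1)$ total approximation error. The coupling must be run for the modified kernel $\tilde{f}_{\btheta}$ with minorisation constants taken uniform over the compact $\bTheta$, and it is precisely the bounded-weight property $\rho'_c\in[0,1]$ and the two-sided bound on $f_{\btheta}$ from \ref{A:poscont} that make this possible.
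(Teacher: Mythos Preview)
Your proposal is correct and follows essentially the same route as the paper's proof: a Fisher-type identity expresses $\tilde{\bs}$ as a sum of smoothing expectations of bounded per-step terms (the paper's Lemma~1), the two-sided bound on $\tilde{f}_{\btheta}$ inherited from \ref{A:poscont} yields uniform geometric forgetting of the robustified smoother (the paper's Lemmas~2--3), this forgetting is used to construct $\bzeta$ as the doubly-infinite limit of windowed contributions with a uniformly controlled remainder (the paper's Lemma~4 and the Cauchy argument for $\xi_{k,s}\to\xi_k$), Birkhoff's theorem gives the a.s.\ and $L^1$ convergence, and Fisher consistency follows from the definition of $b_T$. The only cosmetic difference is that you bound the total remainder directly as $\sum_t O(\lambda^{\min(t,T-t)})=O(1)$, whereas the paper introduces an intermediate window size $s$ and passes to the limit via liminf/limsup; both lead to the same conclusion.
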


The function $\bzeta$ in Theorem~\ref{th:consistency} is not meant to be computed in practice, it is a theoretical tool from which properties like consistency and robustness (see Theorem~\ref{th:IF} below) can be established. Also, Theorem~\ref{th:consistency} deals with the idealized score equations $\tilde{\bs}^\star(\btheta;\by_{1:T}) = \boldsymbol{0}$. It would be of interest to extend it to cover also the effects of the Laplace approximation and of the one step bias correction proposed in Section~\ref{ss:implement}. Unfortunately, this seems very difficult, and the asymptotic behavior of the error of the Laplace approximation is unclear in a general state space framework. This challenge already arises in ML estimation, see \citet[][Section~4.1]{rue2009} where the issue remains unresolved in a simpler setting. That said, we ran many simulations (not presented here) under various designs to assess this: we can confirm that the Laplace approximation error appears practically negligible in terms of parameter estimates when it is applied on a suitable scale, i.e.\ when the joint likelihood is unimodal.

To discuss the robustness properties of $\tilde{\btheta}_T$ and its corresponding $S$, we consider deviations from the assumed data generating process formalized by the general replacement model of \citet{martin1986}: 
\begin{align}\label{eq:grm}
\bY_t^\epsilon &= (1-\bB_t^\epsilon)\bY_t + \bB_t^\epsilon \bW_t,
\end{align}
where $\bY_t$ is distributed according to the nominal $\Plaw_{\btheta}$, $(\bB_t^\epsilon)_{t\geq1}$ is a 0--1 process with $\Pr(\bB_t^\epsilon = 1) = \epsilon$, and $\bW = (\bW_t)_{t \geq 1}$ is a stationary and ergodic contamination process with distribution $\Plaw_{\bW}$. The latter is left unspecified apart from the required stationarity and ergodicity, following \citet[][Section~4]{martin1986}. The process $\bB_t^\epsilon$ is an indicator of contaminated observations with $\epsilon$ representing the fraction of observations that do not come from the assumed model. The general replacement model (\ref{eq:grm}) is flexible enough to consider outliers both isolated in time and occurring in patches, see \citet[][Section~2.2]{martin1986}. Deviations from the model can arise either from outliers in the state process $\bX$ which then propagate to the observations according to the nominal model with density $g$, or directly from outliers in the observations themselves.

We denote the distribution of the contaminated process $(\bY_t^\epsilon)_{t \geq 1}$ by $\Plaw_{\btheta}^{\epsilon}$. The influence functional (IF) of the statistical functional $S$ at the nominal model $\Plaw_{\btheta}$ is defined as
\begin{align*}
\text{IF}(\Plaw_{\bW},S,\Plaw_{\btheta}) &= \lim_{\epsilon \downarrow 0} \frac{S(\Plaw_{\btheta}^\epsilon) - S(\Plaw_{\btheta})}{\epsilon},
\end{align*}
provided the limit exists, see \citet{martin1986}. The IF here depends not only on the nominal model and the distribution of the contaminating process $\bW$, but also on whether the outliers are isolated or occur in patches. To see this, the following two assumptions are needed:
\begin{enumerate}[label=(A\arabic{*})]\setcounter{enumi}{4}
\item{\label{A:MYTh42a} 
$S^\star(\Plaw_{\btheta_0}^{\epsilon}) - \btheta_0 = O(\epsilon)$.
}
\item{\label{A:MYTh42b} 
The $(p \times p)$ matrix  $\bM(\btheta) = (\partial / \partial \btheta)\Expec[\bzeta^\star(\btheta;\bY_0,\bY_{\pm 1},\bY_{\pm 2},\ldots)]$ exists and is non-singular at $\btheta=\btheta_0$.
}
\end{enumerate}
These additional assumptions are among the rather mild technical conditions \citet{martin1986} require in their Theorem~4.2. Note that $\bM(\btheta)$ is nothing more than the Fisher information when $c \rightarrow \infty$. The theorem below states the main robustness result of our proposed estimator.

\begin{theorem}\label{th:IF} 
Under \textnormal{\ref{A:rho}}--\textnormal{\ref{A:MYTh42b}}, if $\lim_{\epsilon\downarrow 0} \Expec_{\Plaw_{\btheta}^{\epsilon}}[\bzeta^\star(\btheta;\bY^\epsilon_0,\bY^\epsilon_{\pm 1},\bY^\epsilon_{\pm 2},\ldots)]/\epsilon$ exists, then
$$ \textnormal{IF}(\Plaw_{\bW},S^\star,\Plaw_{\btheta}) = - \bM(\btheta)^{-1} \lim_{\epsilon\downarrow 0} \Expec_{\Plaw_{\btheta}^{\epsilon}}[\bzeta^\star(\btheta;\bY^\epsilon_0,\bY^\epsilon_{\pm 1},\bY^\epsilon_{\pm 2},\ldots)]/\epsilon.$$
Furthermore, for patchy outliers with arbitrary but fixed patch length $k$, the limit
$$\lim_{\epsilon\downarrow 0} \Expec_{\Plaw_{\btheta}^{\epsilon}}[\bzeta^\star(\btheta;\bY_0^\epsilon,\bY^\epsilon_{\pm 1},\bY^\epsilon_{\pm 2},\ldots)]/\epsilon$$
exists and we have the particular representation
$$ \textnormal{IF}(\Plaw_{\bW},S^\star,\Plaw_{\btheta}) = - \frac{1}{k}\bM(\btheta)^{-1} \sum_{j=-\infty}^{\infty} \Expec\left[ \bzeta^\star(\btheta;\bY_{-\infty:(j-k)}^0, \bW_{(j-k+1):j} , \bY_{(j+1):\infty}^0) \right],$$
where the latter expectation is taken under the joint distribution of $\bY$ and $\bW$.
\end{theorem}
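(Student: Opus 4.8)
The plan is to recognize that the general influence-functional formula is an instance of Theorem~4.2 of \citet{martin1986}, so the bulk of the work is to verify that our robustified-score functional $S^\star$ meets their hypotheses and then to specialize the abstract residual limit to the patchy case. I would begin by writing $\btheta_\epsilon = S^\star(\Plaw_{\btheta}^\epsilon)$ and recording the defining implicit equation $\Expec_{\Plaw_{\btheta}^\epsilon}[\bzeta^\star(\btheta_\epsilon;\bY^\epsilon_0,\bY^\epsilon_{\pm1},\ldots)] = \boldsymbol{0}$. At $\epsilon = 0$ we have $\Plaw_{\btheta}^0 = \Plaw_{\btheta}$ and, by the Fisher consistency established in Theorem~\ref{th:consistency}, both $\btheta_0 = \btheta$ and $\Expec_{\Plaw_{\btheta}}[\bzeta^\star(\btheta;\cdot)] = \boldsymbol{0}$.

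The first part then follows from a one-sided differentiation of this implicit equation at $\epsilon = 0$. Decomposing the $\epsilon$-dependence into the part entering through the argument $\btheta_\epsilon$ and the part entering through the contaminated law $\Plaw_{\btheta}^\epsilon$, the chain rule gives
\begin{equation*}
\boldsymbol{0} = \bM(\btheta)\,\textnormal{IF} + \lim_{\epsilon\downarrow 0}\Expec_{\Plaw_{\btheta}^\epsilon}[\bzeta^\star(\btheta;\bY^\epsilon_0,\bY^\epsilon_{\pm1},\ldots)]/\epsilon ,
\end{equation*}
where the first term uses that $(\partial/\partial\btheta)\Expec[\bzeta^\star]$ is precisely $\bM(\btheta)$ and the second uses that the fixed-$\btheta$ expectation vanishes at $\epsilon=0$, so the residual limit is the right derivative of $\epsilon\mapsto\Expec_{\Plaw_{\btheta}^\epsilon}[\bzeta^\star(\btheta;\cdot)]$. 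Assumption~\ref{A:MYTh42a} guarantees $\btheta_\epsilon - \btheta = O(\epsilon)$, the theorem's standing hypothesis (existence of the residual limit) pins down the one-sided derivative, Assumption~\ref{A:MYTh42b} supplies invertibility of $\bM(\btheta)$, and the boundedness of $\bzeta^\star$ from Theorem~\ref{th:consistency} justifies interchanging the derivative and the expectation by dominated convergence. Solving for $\textnormal{IF}$ yields the stated formula.

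For the patchy case I would compute the residual limit explicitly by expanding the contaminated expectation to first order in $\epsilon$. Under the general replacement model~(\ref{eq:grm}) with patches of fixed length $k$, a single patch occupying positions $(j-k+1):j$ occurs with probability $\epsilon/k + O(\epsilon^2)$, calibrated so that the marginal contamination probability at any fixed time is $\epsilon$, while configurations containing two or more patches contribute $O(\epsilon^2)$. Since $\Expec[\bzeta^\star(\text{clean})] = \boldsymbol{0}$, only the single-patch events survive at first order, each contributing $\Expec[\bzeta^\star(\btheta;\bY^0_{-\infty:(j-k)},\bW_{(j-k+1):j},\bY^0_{(j+1):\infty})]$; summing over $j$, dividing by $\epsilon$, and letting $\epsilon\downarrow0$ produces the $\tfrac{1}{k}\sum_j$ representation.

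The main obstacle will be controlling the infinite sum and the higher-order terms rigorously. Specifically, I would need to show that $\sum_j \Expec[\bzeta^\star(\text{patch at }(j-k+1):j)]$ converges absolutely: as $|j|\to\infty$ the patch recedes from the reference time $0$, and the uniform geometric ergodicity implied by Assumptions~\ref{A:compact}--\ref{A:poscont} forces $\bzeta^\star(\text{patch at }j)$ to approach $\bzeta^\star(\text{clean})$ at a geometric rate, so the summands are geometrically small and the series is summable. The same mixing, combined with the boundedness of $\bzeta^\star$, is what lets me bound the multiple-patch configurations uniformly by $O(\epsilon^2)$ and interchange the limit in $\epsilon$ with the infinite sum; this is exactly the regime in which the technical conditions of \citet[][Theorem~4.2]{martin1986} operate.
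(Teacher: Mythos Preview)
Your proposal is correct and follows the same conceptual route as the paper: both recognize the result as an instance of Theorem~4.2 of \citet{martin1986} and rely on the boundedness of $\bzeta^\star$ together with the geometric forgetting established in the proof of Theorem~\ref{th:consistency}. The paper's own argument is terser than yours: rather than spelling out the implicit differentiation and the first-order patch combinatorics, it simply checks that conditions (a)--(e) of \citet[][Theorem~4.2]{martin1986} hold --- (a) and (b) being exactly \ref{A:MYTh42a} and \ref{A:MYTh42b}, and (c)--(e) following from the exponential mixing bounds in Lemmas~\ref{l:Step2ForgetTVnorm} and~\ref{l:Step3ApproxStProc} --- and then reads off the IF formula and the patchy representation from their Equations~(4.2), (4.2'), (4.6), (4.8), and (4.8'). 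Your version has the merit of being self-contained and making the role of geometric ergodicity in the summability of $\sum_j$ explicit; the paper's version is shorter and offloads the analysis to the cited reference.
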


The proof is deferred to Appendix~S2. In the case of patchy outliers with arbitrary but fixed patch length, the limit is guaranteed to exist and is uniformly bounded over all contamination distributions $\Plaw_{\bW}$, for any $c < \infty$. The representation we provide indicates that we only need to consider one patch of outliers at an arbitrary position which then has a bounded total effect on $\tilde{\bs}^\star(\btheta;\by_{1:T})$ as $T \rightarrow \infty$. The boundedness of the IF ensures that any bias caused by the arbitrary contaminating process $\bW$ is finite. In contrast, the MLE has an unbounded IF (as $c \rightarrow \infty$) meaning that even an infinitesimal $\epsilon$ can draw the estimate to arbitrarily large values \citep[see][Chapter~2.1]{hampel1986}. The MLE is thus, without much surprise, not robust.

We note that instead of the general replacement model (\ref{eq:grm}), one could also look at innovation outliers \citep{denby1979} in the state process, i.e.\ letting the contaminated state transition density of $\bX_t^\epsilon | \bx_{t-1}^\epsilon$ be $\big((1-\bB_t^\epsilon)f_{\btheta_0}(\bx_{t-1}^\epsilon, \bx_t) + \bB_t^\epsilon f_{\bW}(\bx_{t-1}^\epsilon, \bx_t)\big) \diffd \bx_t$ with arbitrary contamination density $f_{\bW}$. Although the conditional distribution of the observations $\bY^\epsilon$ given the states $\bX^\epsilon$ would still be according to the nominal model, outliers in the state sequence would have an effect on several observations. However, if the uncontaminated state process is geometrically ergodic, the effect of an outlier then disappears quickly. We would thus expect a similar behavior of the influence functional under our assumptions.

\section{Simulation study}\label{s:sim}

We investigate here the finite-sample performance of our proposed robust estimator through a simulation study. The model and design are based on the state space stock assessment model used for pollock in the North Sea. The original specification of this stock assessment model is denoted as ``Model~D'' in \citet{nielsen2014}, which is a version of the state space model known as SAM in the fisheries science literature. An important feature of this original specification is that the assumed unobserved state process is not stationary. Since our consistency and robustness properties rely on stationarity (see Section~\ref{ss:theory}), we thus define two versions of this SSM: a non-stationary version that closely follows \citet{nielsen2014} and a stationary one where the dynamics of $\bX$ are modified as little as possible and the observation equations remain unchanged. We note that we only introduce the stationary version to illustrate the properties of the robust estimator under controlled conditions. The non-stationary model is likely more realistic from a marine ecology point of view, hence we present results from both versions here and in Section~\ref{s:data}.

\subsection{Non-Stationary Model}\label{ss:modeloverview}

We detail here the non-stationary version of the SSM. The details of the stationary version are given in Appendix~S3.

All variables are indexed by the discrete fish age class $a$ and the year of observation $t$. The unobserved state vector $\bX_t$ consists of the unitless fishing mortality rate $F_{a,t}$ and the abundance of fish $N_{a,t}$ (as counts), both on the natural logarithm scale. The vector of observations $\bY_t$ comprises the total annual commercial catches $C_{a,t}$ (as counts) and a unitless index of relative abundance coming from yearly surveys $I_{a,t}$, also both considered on the log scale. The log transformations here are not based on some Box-Cox argument but rather come from the traditional interpretation in fish stock assessments of multiplicative nonnegative error on the original scale. The age ranges from $a=3$ years, so-called recruits, up to $a=A$ where the oldest age class includes fish of age $\geq A$ (a ``plus-group'') with the cut-off age $A$ varying from one variable to another: $A=9+$ for the fishing mortality $F$, $A=10+$ for the abundance $N$ and commercial catch $C$, and $A=8+$ for the survey index $I$. Thus the vector of unobserved states at year $t$ is
\begin{align*}
\bX_t &= (\log F_{3,t},\ldots,\log F_{9+,t},\log N_{3,t}, \ldots, \log N_{10+,t})^\top,
\end{align*}
with dimension $q=15$, and the vector of observations at year $t$ is
\begin{align*}
\bY_t &= (\log C_{3,t},\ldots,\log C_{10+,t},\log I_{3,t},\ldots,\log I_{8+,t})^\top,
\end{align*}
with dimension $r=14$. The time range is $T=50$, with $t=1$ corresponding to the year 1967 and $t=T$ corresponding to 2016. In the North Sea pollock data, $C_{a,t}$ is available up to year 2015 and $I_{a,t}$ spans the years 1992--2016, resulting in a sample size of $542$ observations. We replicate this missingness pattern in the simulated samples, noting there is sufficient overlap between the two data sources to cover the whole 1967--2016 time window.

The non-stationary version of the SSM is slightly different from the original specification of \citet{nielsen2014} in that we allow for a different process variance for the survival of the ``plus-group'' ($\sigma^2_P$) and we consider a single observation variance across all ages for the commercial catches ($\sigma^2_C$) and for the survey indices ($\sigma^2_I$), this for parsimony reasons.

The fishing mortalities $F_{a,t}$ follow a (vector) random walk on the natural logarithm scale:
\begin{align*}
\left[\begin{array}{c} \log F_{3,t}\\ \vdots \\ \log F_{9+,t}  \end{array}\right] &= \left[\begin{array}{c} \log F_{3,t-1}\\ \vdots \\ \log F_{9+,t-1}  \end{array}\right] + \bxi_t, \quad \forall t=2,\ldots,T,
\end{align*}
where the $\bxi_t$s are i.i.d.\ multivariate normal with zero mean vector and covariance matrix $\bSigma$ describing dependence across age classes akin to a first-order autoregressive process AR(1): 
$$ \bSigma_{a,a'} = \rho^{|a-a'|}\cdot\sigma_{F_a}\sigma_{F_{a'}}$$
for ages $a$ and $a'$. A distinct variance parameter is set for the first age class $\sigma^2_{F_{a=3}}$, while all other age classes share the same process error variance $\sigma^2_{F_{a\geq 4}}$. The dynamics of the true abundance $N_{a,t}$ vary between ages. For the youngest age class $a=3$ we have a random walk on the log scale
\begin{align*}
\log N_{3,t} &= \log N_{3,t-1} + \eta_{3,t},
\end{align*}
where $\eta_{3,t}$ are i.i.d.\ N$(0,\sigma^2_R) \; \forall t\geq2$. Then, for $4\leq a \leq 9$, the equation
\begin{align*}
\log N_{a,t} &= \log N_{a-1,t-1} - F_{a-1,t-1} - M_{a-1,t-1} + \eta_{a,t},
\end{align*}
specifies the survival of fish of intermediate ages by tracking cohorts through time, where $M_{a,t}$ denotes the natural mortality rate (all other causes except fishing) and $\eta_{a,t}$ are i.i.d.\ N$(0,\sigma^2_N) \; \forall t\geq2$. Natural mortality rates are obtained from other data sources such as stomach samples, and are considered a fixed covariate here (set to 0.2 for all ages and years). Finally, for the largest age class $a=10+$ we consider the additional survival of fish in the oldest ``plus-group'' from the previous year:
\begin{align*}
\log &N_{10+,t} =\\
&\log\Big[ N_{9,t-1} \exp\big\{-F_{9+,t-1} - M_{9,t-1}\big\} + N_{10+,t-1}\exp\big\{-F_{9+,t-1} - M_{10+,t-1}\big\}\Big]  + \eta_{10+,t},
\end{align*}
where $\eta_{10+,t}$ are i.i.d.\ N$(0,\sigma^2_P) \; \forall t\geq2$. As in the original specification of \citet{nielsen2014}, no distribution is explicitly specified for the initial states. This can be thought of as $h_{\btheta}$ being an extremely wide uniform density, with the dynamics mainly driving the state predictions.

Regarding observed variables, the \citet{baranov1918} catch equation for the total annual commercial catch $C_{a,t}$ is adapted as
\begin{align*}
\log C_{a,t} &= \log\left[ \frac{F_{a,t}}{Z_{a,t}} \big\{1-\exp(-Z_{a,t})\big\} N_{a,t} \right] + \epsilon^{C}_{a,t},
\end{align*}
where $Z_{a,t} = F_{a,t} + M_{a,t}$ is the total mortality rate and $\epsilon^{C}_{a,t}$ are i.i.d.\ N$(0,\sigma^2_{C})$, for $a=3,\ldots,10+$ and $t=1,\ldots,T-1$. Commercial catch data is indeed only available up to year 2015. Finally, the survey indices of relative abundance $I_{a,t}$ are linked to the total mortality and true abundance through
\begin{align*}
\log I_{a,t} &= \log\left[ q_{a} \cdot \exp\left\{ -Z_{a,t}\cdot\frac{\#\text{days}}{365} \right\}\cdot N_{a,t} \right] + \epsilon^{I}_{a,t},
\end{align*}
where $q_a$ is known as a catchability coefficient as it scales the relative index to the true abundance by how likely a fish of a given age can be caught, \#days stands for the number of days into the year the survey has been conducted ($\#\text{days}/365 = 0.622$ here), and $\epsilon^{I}_{a,t}$ are i.i.d.\ N$(0,\sigma^2_{I})$. The catchability coefficients are different for all six age classes $a=3,\ldots,8+$, while the time index $t$ only spans the years 1992--2016.

Overall, all error terms are Gaussian on the log scale, but relations are highly non-linear. The $p$-dimensional parameter to estimate is
\begin{align*}
\btheta = (\sigma_{F_{a=3}}, \sigma_{F_{a\geq4}}, \rho, \sigma_R, \sigma_N, \sigma_P, \sigma_C, q_{a=3}, \ldots, q_{a=8+}, \sigma_I)^\top,
\end{align*}
with $p=14$. Regarding the stationary version of the model, random walks are replaced by first-order autoregressive processes with non-zero stationary expectations, and some survival equations are modified so as to derive a stationary distribution (log-normal on the original scale). In the stationary version more parameters need to be estimated, such as the fishing mortality stationary mean for each age class, so that $p=26$ in this case; see Appendix~S3 for details.

\subsection{Simulation design}\label{ss:simdesign}

For both versions of the SSM, we simulate $500$ replicates both at the assumed model and under contamination (see below), and compare our robust estimator to the MLE in each scenario. On the one hand, the goal at the model is to ensure that the robust estimator is not biased relative to the MLE, i.e.\ the Fisher consistency correction step is effective. On the other hand, the purpose of contaminating the data is to study the reliability of the robust estimator in the presence of deviations from model assumptions. To simulate samples at the model, we use the robust estimates of $\btheta$ from the North Sea pollock data. Table~S1 in Appendix~S4 reports these values along with the starting values we used for the MLE. Regarding the initial states, since the non-stationary version of the model leaves them free, we fix them in this case to the estimated values from an initial run of Model~D on the North Sea pollock data (accessible at stockassessment.org). These fixed initial states are only used for simulating data, they are not part of the model fitted to data, both here and in Section~\ref{s:data}. In the case of the stationary version, we simulate the initial states from the stationary invariant distribution.

Our contamination scheme is meant to somewhat mimic what we observe in the North Sea pollock data: we select a single year, the year 2000, and subtract a constant value (of 4) from the realized log-catches of the recruits $\log C_{a=3,t}$, while enforcing a floor at 1 to retain realistic values. Hence we contaminate only part of the observation vector $\bY_t$, leaving the survey indices untouched. By modifying only a single age class in a single year the overall contamination proportion is $1/542\approxeq 0.2\%$, taking into account the missing values in $C_{a,t}$ and $I_{a,t}$. With simulated $\log C_{a=3,t}$ values typically around 8 surrounding the year 2000, subtracting 4 is sufficient to create a noticeable bias in the MLE. We note that the contamination proportion could be increased, affecting other components of the observation vector, but only up to a certain point: the breakdown point of our proposed robust estimator is likely no better than the $1/p$ benchmark (i.e.\ $1/14 \approxeq 7.1\%$ for the non-stationary version of the model) set by $M$-estimators in simpler models such as linear regression.

We use the proposed SSH $\rho$ function which satisfies \ref{A:rho}, while \ref{A:compact}--\ref{A:MYTh42b} are satisfied by the stationary version of the model if we concede that the additive Gaussian error terms for $\bX$ are supported on a compact set, a concession which is often (implicitly) made in the literature, e.g.\ in non-parametric statistics. A compact state space is arguably realistic here since both the fish stock abundance $N_{a,t}$ and the fishing mortality $F_{a,t}$ are bounded by finite natural resources. We could make this explicit in the model by considering a truncated Gaussian distribution, but without information about where to truncate we would typically truncate far out in the tails (say, beyond $\pm 10$ for a standard Gaussian). So the results would end being practically identical to those presented here with a non-truncated Gaussian distribution. We tune the robust estimator to achieve around $90\%$ empirical efficiency based on the MAD for all likelihood contributions. Due to the various scale parameters, we set six different tuning constants, one for each of the following likelihood contributions: $(\log F_{3,t},\ldots,\log F_{9+,t})^\top$; $\log N_{3,t}$; $\log N_{a,t}$ for $4\leq a \leq 9$; $\log N_{10+,t}$; $(\log C_{3,t},\ldots,\log C_{10+,t})^\top$; and $(\log I_{3,t},\ldots,\log I_{8+,t})^\top$. The resulting $c$ values are $(1.2, 2.3, 1.0, 1.3, 1.3, 2.9)$ and $(2.0, 2.2, 1.5, 1.5, 1.2, 2.9)$ for the non-stationary and stationary model, respectively.

The MLE serves as a starting point and the optimization in Step~1 is done in two steps with decreasing $c$ (see Section~\ref{ss:implement}). We use $B=1000$ replicates for the Monte Carlo approximation of the Fisher consistency correction term.

Our implementation and all simulations are performed in R version 3.4.3 \citep{R2017} with TMB version 1.7.12. Documented code, including TMB templates, is readily available on the first author's GitHub page.

\subsection{Simulation results}\label{ss:simres}

Figure~\ref{fig:boxplots} presents boxplots of ML and robust estimates of selected parameters, both at the assumed model and under contamination. For each version of the model, the two selected parameters are the ones for which the MLE is the least and most biased under contamination. This serves as an illustration, while Figures~S1 and S2 in Appendix~S4 display similar boxplots for all elements of $\btheta$, respectively for the non-stationary model and for the stationary version, with the addition of the uncorrected robust estimator $\tilde{\btheta}_T^{[1]}$ computed at the model to show how effective the one-step correction in Step~3 is. At the assumed model, both sets of estimates are well-centered, with the exception perhaps of $\phi_F$ and $\phi_R$ in the stationary version whose slight underestimation is likely due to the small sample size ($T=50$). The robust estimates are generally more variable than the MLE, as is typically expected. The uncorrected robust estimator shows some slight bias only for a few parameters, this being because of the high efficiency aimed for, resulting in rather high tuning constant values. Under contamination, not all parameters are biased when estimated by ML, this being due to our particular contamination scheme affecting only $\log C_{a=3,t}$ at a single time point. But the robust estimates are systematically closer to the true parameter values and with variability often similar to that of estimates at the model.

\begin{figure}[ht!]
\begin{center}
\includegraphics[width=\columnwidth]{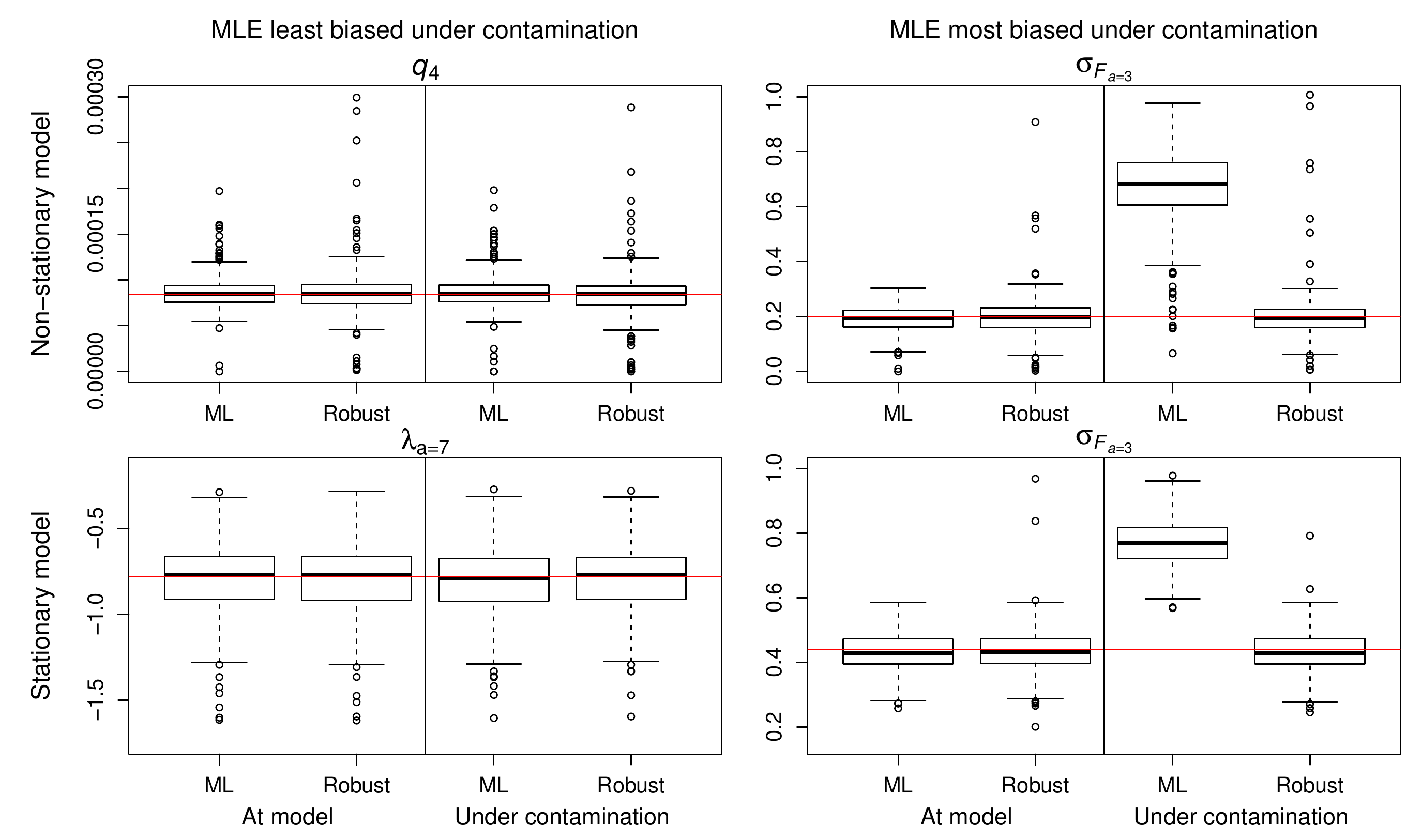}
\end{center}
\caption{Boxplots of ML and robust estimates for those parameters where the MLE is least/most biased under contamination, for the non-stationary and stationary versions of the model, at the assumed model and under contamination. The red horizontal solid line is the true parameter value. The vertical scale is manually set for better visualization, some points (both ML and Robust) not shown.
{\label{fig:boxplots}}%
}
\end{figure}

Figure~\ref{fig:boxplotsRMSEnst} below and Figure~S3 in Appendix~S4 present boxplots of root MSE (RMSE) of in-sample predictions of the unobserved states $\bX_t$ as computed from the Laplace approximation for the non-stationary and the stationary version of the model, respectively. The ML predictions (left column) are based on the maximization of the SSM likelihood in (\ref{eq:jointlkhd}), akin to posterior modes, while the robust ones (right column) are obtained by maximizing the robustified likelihood in (\ref{eq:robjointlkhd}), both given their respective estimates for $\btheta$. The RMSEs are computed by averaging squared differences over the entire $\bX_t$ vector for $t=1,\ldots,T$, so that each boxplot corresponds to a given year and represents the $500$ replicates. At the assumed model, the robust predictions tend to vary slightly more than the ML ones, especially for the non-stationary version. This is expected given the general loss of efficiency, but it could also be due to the fact that the posterior modes derived from $\tilde{p}_{\btheta}(\bx_{0:T} \,|\, \by_{1:T})$ do not coincide in general with those coming from the nominal $p_{\btheta}(\bx_{0:T}, \by_{1:T})$. Correcting the robust state predictions is out of scope of the paper, as we focus on the estimation of $\btheta$, but we note that any prediction bias due to not correcting seems quite marginal in our simulations. Comparing with the contamination scenario, it is clear that ML-based predictions get heavily biased around the contaminated year 2000 (grey-colored box). The Markovian dynamics of $\bX$ are likely responsible for how local this bias appears. By contrast, the RMSEs of robust predictions are comparable between the two scenarios, hinting that the unobserved states can be predicted in a robust way along with the estimates for $\btheta$.

\begin{figure}[ht!]
\begin{center}
\includegraphics[width=\columnwidth]{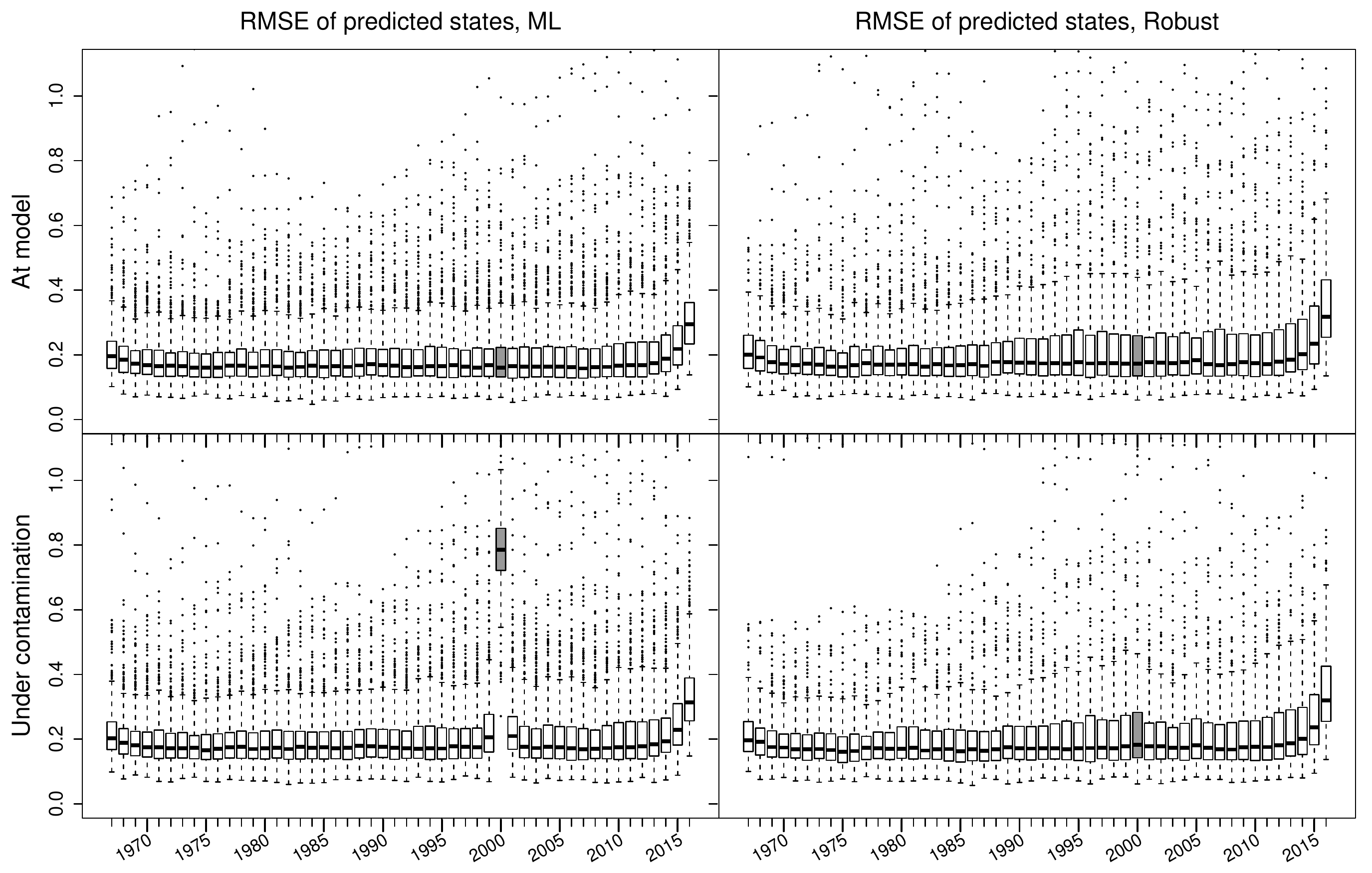}
\end{center}
\caption{Boxplots of RMSE of predicted states $\bX_t$ for the non-stationary version of the model, based on ML (left column) and robust (right column) methods, at the assumed model (top row) and under contamination (bottom row). The box corresponding to the contaminated year 2000 is colored in grey. The vertical scale is manually set for better visualization, some points (both ML and Robust) are not shown.
{\label{fig:boxplotsRMSEnst}}
}
\end{figure}

The robustness weights represented by $\rho'$ are also worth investigating. At the model, under the non-stationary version, a weight of 1 happens for at least 98\% of all likelihood contributions, replicates and time points, while the 1\% empirical quantile is 0.979. Under the stationary version, the 1\% quantile is exactly 1. In other words, at the model the robustness weights are nearly always 1, meaning that the $\rho$ function rarely affects estimation. Under contamination, the weights on the contributions of $\log F_{a,t}$, $\log N_{a,t}$ and $\log I_{a,t}$ are still very high: for the non-stationary model, at least 97\% over all replications and time points equal 1, while the 2\% quantile is 0.966. For the stationary version, the 2\% and 1\% quantiles are 1 and 0.997, respectively. However, the weights on $\log C_{a,t}$ show a distinct pattern under contamination. Table~S2 in Appendix~S4 presents quartiles of the weights for the recruits ($a=3$) over three consecutive years. The contaminated year 2000 shows systematically low weights while nearby years do not, nor do the same years at the model. This is clear evidence that the robust method is able to detect the deviating observations. By effectively downweighting them, we obtain reliable estimates and predictions as presented above.

\section{North Sea pollock stock assessment}\label{s:data}

We now apply the proposed robust estimator to the assessment of the pollock fishery in the North Sea. The pollock stock covers Division IIIa and Subareas IV and VI as defined by ICES, see Figure~S4 in Appendix~S5. The yearly total commercial catches $C_{a,t}$ and survey indices of relative abundance $I_{a,t}$ are only available for selected years within the 1967--2016 window, see Appendix~S3 for details. Additional information about the pollock stock can be found in \citet[][p.~495]{ices2015}. We again stress that the non-stationary (original) SSM is a real-life model used for the assessment of fish stocks and directly contributes to the science-based management of fisheries.

We fit both versions of the SSM to the North Sea pollock data and compare outputs from the ML and robust methods. The SSH $\rho$ function tuning constant values are the same as in the simulations of Section~\ref{s:sim} and we generate $B=5000$ samples for the Monte Carlo approximation of $b_T(\btheta)$. Table~S3 in Appendix~S5 reports computation times with a laptop housing a 2.9 GHz CPU: while the simulation-based correction step for the robust estimator can be lengthy depending on the value of $B$, the optimizations over $\btheta$ take at most a few seconds with these data. Table~S4 in Appendix~S5 reports the estimates and standard errors for both methods and both model versions. Standard errors for the robust estimates are based on the uncorrected estimates and are solely given for reference. Overall, the estimates are qualitatively comparable for the two estimation methods. This means that no significant deviations from model assumptions seem to draw the MLE to arbitrary values and that such estimates may thus be trusted.

The state predictions show a more nuanced message, they are presented in Figure~\ref{fig:dataprednst} below for the non-stationary version of the model and in Figure~S5 in Appendix~S5 for the stationary version. The colored curves represent the point-wise predictions for the two components of $\bX$, the fishing mortality rate $F_{a,t}$ and the abundance $N_{a,t}$, while the shaded envelopes are constructed as $[\text{prediction} \pm 2 \times\text{standard error}]$ on the log scale with the bounds then exponentiated back to the original scale of $F_{a,t}$ and $N_{a,t}$. These envelopes are not intended for inference but are displayed to give a sense of the variability around predictions. While in general the ML and robust predictions tend to agree, there are a few instances where they differ markedly under the non-stationary model (indicated by red circles in the figure): before 1970 the robust predictions for $F_{a,t}$ are rather constant while the corresponding ML predictions feature a trough in 1968; in the mid-1970s the peak in recruits is much sharper for ML predictions than for robust ones; also in the mid-1970s, fishing mortalities are predicted at higher levels according to the ML method. While these differences may seem subtle, they can have important consequences since a strikingly abundant cohort gives the impression the fish stock is in good health and can withstand further fishing pressure. Under the stationary model, predictions are quite similar between methods but standard errors are often larger for robust estimates, hinting at less confidence in the predicted trajectories.

\begin{figure}[ht!]
\begin{center}
\includegraphics[width=\columnwidth]{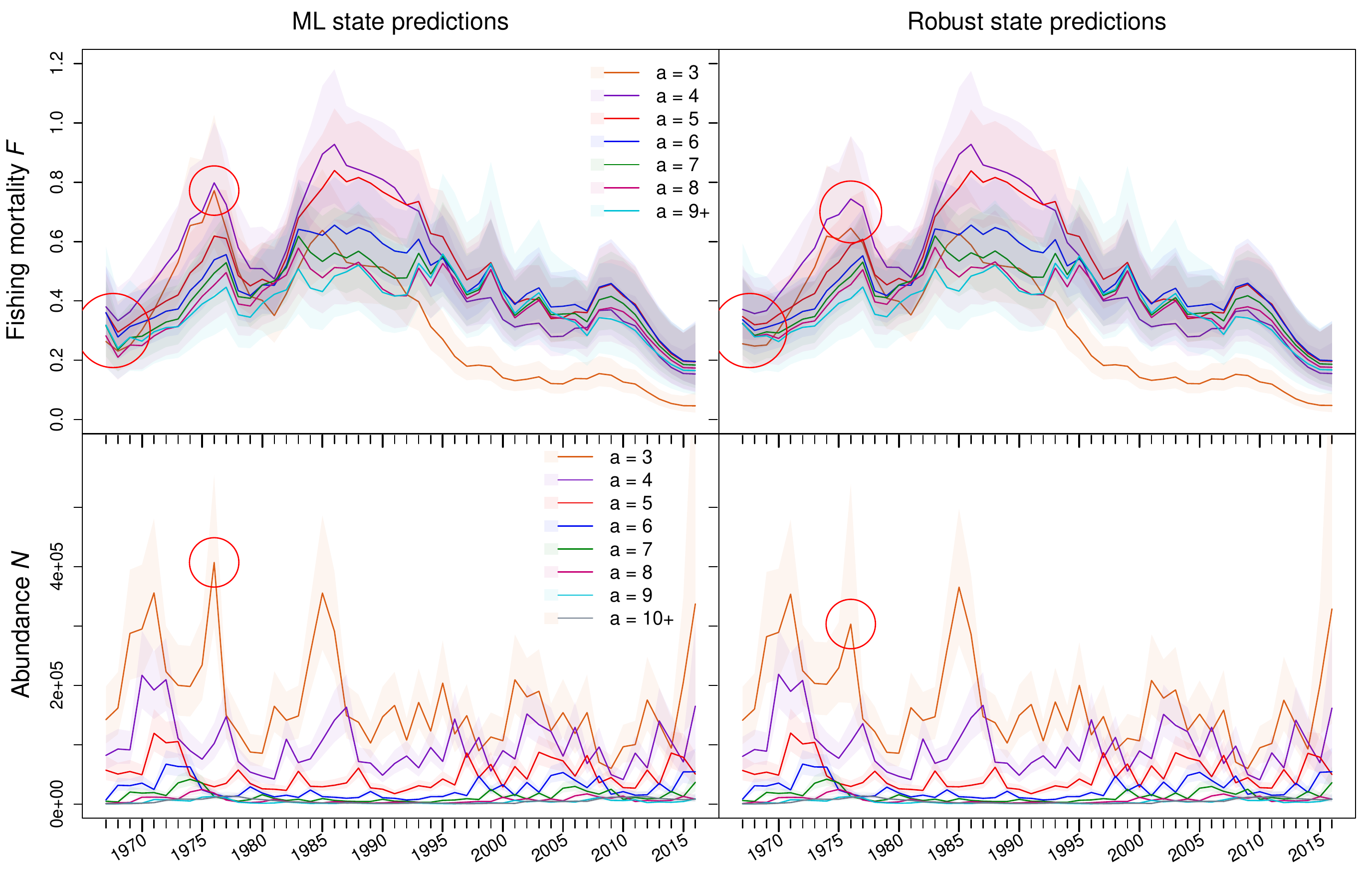}
\end{center}
\caption{North Sea pollock data ML and robust predictions of $\bX$ (on the original scale) for the non-stationary version of the model. The colored envelope are interpolated point-wise intervals constructed as $[\text{prediction} \pm 2 \times\text{standard error}]$ on the log scale with the bounds then exponentiated back to the original scale of $F_{a,t}$ and $N_{a,t}$. The circles indicate where the predictions differ markedly between the two methods.
{\label{fig:dataprednst}}%
}
\end{figure}

The robustness weights can shed light on the discrepancies between model outputs. The weights pertaining to the unobserved states are all virtually equal to 1, the ones corresponding to the observations are shown in the top and middle rows of Figure~S6 in Appendix~S5. Three data points are heavily downweighted under the non-stationary model, they are circled in the two bottom panels where the observed log catches time series are shown for all age classes within selected time windows. Two of these downweighted data points happen in 1968 and seem unusually low given the local trends. These two points are responsible for the ``kink'' in ML predictions for $F_{a,t}$ prior to 1970 in Figure~\ref{fig:dataprednst}. The robust method, by downweighting these two points, thus yields predictions that are more stable around those years. The third downweighted observation happens in 1976, it appears as a sudden peak in the recruits while other age classes do not show such a peak. The fact that this abundant cohort cannot be traced through the catches of older fish in the following years casts doubt on the validity of this particular data point. The robust method downweights it and this results in the much subdued peak of predicted abundance as seen in Figure~\ref{fig:dataprednst}. 

It is commonly known that commercial catches were not reported and recorded in a reliable way until the early 1980s for many fish stocks in the North Sea. Under the non-stationary model, the robustness weights objectively identify a few observations that seem to confirm this narrative. We note that the same observations do not receive particularly low weights under the stationary model. This means that this different model is somehow flexible enough to accommodate these data points. Observations are indeed only outlying given a model and its specific assumptions.

\section{Discussion and future directions}\label{s:conc}

We have introduced a robust and consistent estimation method for SSMs, including both a novel SSH $\rho$ function that does not require a compact observation outcome space and a simulation-based procedure to tune the robustness-efficiency tradeoff. The setting is quite general, requiring mainly that densities are available analytically. In addition, our proposed implementation utilizes state-of-the-art AD and the Laplace approximation to balance generality, computational cost and ease of programming. The motivating North Sea pollock fishery assessment illustrates how the robust method can identify atypical observations and limit their impact, guaranteeing reliable conclusions. Our simulations and application show that not only do parameter estimates remain reliable with contaminated data, but in-sample state predictions arising from the Laplace approximation (as posterior modes) also benefit from our robustification. Further work is needed to assess whether this is simply a byproduct of robust estimates of $\btheta$ or if the proposed method may yield robust state predictions for any given $\btheta$. Whether this would extend to robust filtering, as in \citet{calvet2015}, and to out-of-sample forecasting is an also open question. In addition, combining our robust estimation method with a robust filter represents important future research.

The simulation-based tuning of our proposed robust method is somewhat laborious due to the application of $\rho$ on the log-likelihood scale. The efficiency-robustness tradeoff will invariably depend on the model and design, but some standardization of the log-likelihood contributions could simplify the choice of the $c$ values. Also, the consistency and robustness properties we studied rely on the exact evaluation of the robust scores $\tilde{\bs}(\btheta;\by_{1:T})$. We plan to further study how the Laplace approximation involved in our proposed implementation may impact such properties. We note that such a study is in fact lacking in the literature even for ML estimation. Finally, we observed that our simulations show similar performance of the robust estimator for both non-stationary and stationary SSMs. This encourages the extension of some results to non-stationary processes, allowing a certain broader applicability of the method.

\section*{Acknowledgments}


This research was carried out as part of the collaborative research team initiative Advancements to State Space Models for Fisheries Science funded by the Canadian Statistical Sciences Institute.




\renewcommand{\thesection}{\Alph{section}}
\setcounter{equation}{0}
\setcounter{section}{0}

\pagebreak

\begin{center}
\Large{Web Appendices for \textit{Robust Estimation for Discrete-Time State Space Models}}

\bigskip
\normalsize{William H.\ Aeberhard$^{1,2}$, Eva Cantoni$^{3}$, Chris Field$^{2}$,\\ Hans R.\ K\"{u}nsch$^{4}$, Joanna Mills Flemming$^{2}$, and Ximing Xu$^{5}$}\\[0.7ex]
\footnotesize{$^{1}$Department of Mathematical Sciences, Stevens Institute of Technology}\\[0.7ex]
\footnotesize$^{2}$Department of Mathematics and Statistics, Dalhousie University\\[0.7ex]
\footnotesize$^{3}$Research Center for Statistics and GSEM, University of Geneva\\[0.7ex]
\footnotesize$^{4}$Seminar f\"{u}r Statistik, ETH Zurich\\[0.7ex]
\footnotesize$^{5}$School of Statistics and Data Science, Nankai University\\
\end{center}

\section*{Appendix S1: Proof of Theorem~1}

The proof of Theorem~1 in the main body of the paper can be summarized as follows: first, we show that the uncorrected robustified marginal score $\tilde{\bs}(\btheta;\by_{1:T})$ can be written as a sum of random variables; second, we prove that the summands are approximately stationary and ergodic; third, using Birkhoff's ergodic theorem we show that $\tilde{\bs}(\btheta;\by_{1:T})$ converges to a non-random function; finally, we wrap-up the proof by showing that the correction term $b_T(\btheta)$ guarantees unbiased estimating equations asymptotically.

We start by finding a more tractable expression for $\tilde{\bs}(\btheta;\by_{1:T})$. Let
\begin{align*}
\tilde{h}_{\btheta}(\bx) &= \exp\big[\rho_c\big\{\log h_{\btheta}(\bx)\big\}\big]\\
\tilde{f}_{\btheta}(\bx,\bx') &= \exp\big[\rho_c\big\{\log f_{\btheta}(\bx,\bx')\big\}\big]\\
\tilde{g}_{\btheta}(\bx,\by) &= \exp\big[\rho_c\big\{\log g_{\btheta}(\bx,\by)\big\}\big],
\end{align*}
so that 
\begin{align*}
\tilde{p}_{\btheta}(\bx_{0:T}, \by_{1:T}) &= \tilde{h}_{\btheta}(\bx_0) \prodtT \tilde{f}_{\btheta}(\bx_{t-1},\bx_t) \tilde{g}_{\btheta}(\bx_t,\by_t).
\end{align*}
The functions $\tilde{f}$, $\tilde{g}$ and $\tilde{h}$ are not proper densities since they do not integrate to one in general. That said, given that we will correct for Fisher consistency, we can ignore the normalization constants thanks to (A1)(iii) and will hereafter consider $\tilde{p}_{\btheta}(\bx_{0:T}, \by_{1:T})$ as if it is a density function. Note that the normalizing constant needed to make $\tilde{p}$ a proper density cancels out in the robustified conditional densities which appear below in Lemma~\ref{l:Step0FisherId} and \ref{l:Step1InhomogenousMarkovChain}. Let $\tilde{p}_{\btheta}(\by_{1:T}) = \int_{\mathsf{X}^{(T+1)}} \tilde{p}_{\btheta}(\bx_{0:T}, \by_{1:T}) \diffd \bx_{0:T}$. The following lemma yields a result analogous to the Fisher identity \citep[see e.g.][Appendix~D]{douc2014}.

\begin{lemma}\label{l:Step0FisherId}
Under \textnormal{(A1)},
$$ \tilde{\bs}(\btheta;\by_{1:T}) = \tilde{\Expec}_{\btheta}\left[ \chi(\bx_0,\btheta) | \by_{1:T}\right] + \sumtT\tilde{\Expec}_{\btheta}\left[\alpha(\bx_{t-1},\bx_t,\btheta) | \by_{1:T}\right] + \sumtT \tilde{\Expec}_{\btheta}\left[\beta(\bx_t,\by_t,\btheta) | \by_{1:T}\right] ,$$
where
$\tilde{\Expec}_{\btheta}[\cdot | \by_{1:T}]$ denotes the conditional expectation involving the robustified density of the states given the observations $\tilde{p}_{\btheta}(\bx_{0:T} | \by_{1:T}) = \tilde{p}_{\btheta}(\bx_{0:T}, \by_{1:T})/\tilde{p}_{\btheta}(\by_{1:T})$ and
\begin{align*}
\chi(\bx,\btheta) &= \frac{\partial}{\partial \btheta}\log\tilde{h}_{\btheta}(\bx),\\
\alpha(\bx,\bx',\btheta) &= \frac{\partial}{\partial \btheta}\log\tilde{f}_{\btheta}(\bx,\bx'),\\
\beta(\bx,\by,\btheta) &= \frac{\partial}{\partial \btheta}\log\tilde{g}_{\btheta}(\bx,\by).
\end{align*}
\end{lemma}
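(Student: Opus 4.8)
The plan is to recognize the claimed identity as the Fisher (or Louis) identity adapted to the robustified likelihood: the gradient in $\btheta$ of the marginal log-likelihood equals the conditional expectation, given $\by_{1:T}$, of the gradient of the complete-data log-likelihood. I would start from the definition $\tilde{\bs}(\btheta;\by_{1:T}) = \partial\tilde{\ell}(\btheta;\by_{1:T})/\partial\btheta = \partial \log\tilde{p}_{\btheta}(\by_{1:T})/\partial\btheta$ and write it as $\tilde{p}_{\btheta}(\by_{1:T})^{-1}\,\partial\tilde{p}_{\btheta}(\by_{1:T})/\partial\btheta$. Expanding $\tilde{p}_{\btheta}(\by_{1:T}) = \int_{\mathsf{X}^{(T+1)}} \tilde{p}_{\btheta}(\bx_{0:T},\by_{1:T})\,\diffd\bx_{0:T}$ and interchanging differentiation and integration, the numerator becomes $\int \partial\tilde{p}_{\btheta}(\bx_{0:T},\by_{1:T})/\partial\btheta\,\diffd\bx_{0:T}$.

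Next I would apply the elementary log-derivative identity $\partial\tilde{p}_{\btheta}/\partial\btheta = \tilde{p}_{\btheta}\cdot\partial\log\tilde{p}_{\btheta}/\partial\btheta$ inside the integral, giving
\[
\tilde{\bs}(\btheta;\by_{1:T}) = \int_{\mathsf{X}^{(T+1)}} \frac{\tilde{p}_{\btheta}(\bx_{0:T},\by_{1:T})}{\tilde{p}_{\btheta}(\by_{1:T})}\,\frac{\partial}{\partial\btheta}\log\tilde{p}_{\btheta}(\bx_{0:T},\by_{1:T})\,\diffd\bx_{0:T} = \tilde{\Expec}_{\btheta}\!\left[\frac{\partial}{\partial\btheta}\log\tilde{p}_{\btheta}(\bX_{0:T},\by_{1:T})\,\Big|\,\by_{1:T}\right],
\]
the weight being exactly the conditional density $\tilde{p}_{\btheta}(\bx_{0:T}|\by_{1:T})$. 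Since $\log\tilde{p}_{\btheta}(\bx_{0:T},\by_{1:T}) = \log\tilde{h}_{\btheta}(\bx_0) + \sumtT\{\log\tilde{f}_{\btheta}(\bx_{t-1},\bx_t) + \log\tilde{g}_{\btheta}(\bx_t,\by_t)\}$ is additive, differentiating termwise produces exactly $\chi(\bx_0,\btheta) + \sumtT\alpha(\bx_{t-1},\bx_t,\btheta) + \sumtT\beta(\bx_t,\by_t,\btheta)$, and linearity of the conditional expectation yields the stated three-term expression. The normalizing constant that would turn $\tilde{p}_{\btheta}$ into a proper density is irrelevant: it cancels in the ratio defining $\tilde{p}_{\btheta}(\bx_{0:T}|\by_{1:T})$ and does not appear in the additive log-decomposition, since $\tilde{p}_{\btheta}$ is the bare product of $\tilde{h}_{\btheta}$, $\tilde{f}_{\btheta}$ and $\tilde{g}_{\btheta}$.

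The one step needing genuine justification, and the main obstacle, is the interchange of $\partial/\partial\btheta$ and $\int\diffd\bx_{0:T}$. I would justify it by dominated convergence, producing a $\btheta$-uniform integrable envelope for $\partial\tilde{p}_{\btheta}/\partial\btheta$ on a neighborhood of each $\btheta\in\bTheta$. Differentiability of the integrand follows from (A1), which makes $\rho_c$ continuously differentiable, together with (A3), which makes $\btheta\mapsto f_{\btheta}$ and $\btheta\mapsto g_{\btheta}$ continuously differentiable; by the chain rule $\beta(\bx,\by,\btheta) = \rho'_c\{\log g_{\btheta}(\bx,\by)\}\,\partial\log g_{\btheta}(\bx,\by)/\partial\btheta$, and analogously for $\alpha$ and $\chi$. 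The bound $\rho'_c\in[0,1]$ from (A1)(i)--(ii), the uniform bound on $\partial\log g_{\btheta}/\partial\btheta$, and the two-sided bounds $\sigma_{-}\le f_{\btheta}\le\sigma_{+}$ in (A3) control $\beta$ and $\alpha$ uniformly, while compactness of $\mathsf{X}$ in (A2) and of $\bTheta$ in (A4), together with the integrability guaranteed by (A1)(iii), turn these uniform bounds into an integrable envelope over $\mathsf{X}^{(T+1)}$; the interchange is then licit and all conditional expectations are finite. I expect this envelope bookkeeping, rather than the algebra of the identity itself, to be where the assumptions do their real work.
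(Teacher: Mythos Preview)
Your proposal is correct and follows essentially the same route as the paper's proof: both write $\tilde{\bs}(\btheta;\by_{1:T}) = \tilde{p}_{\btheta}(\by_{1:T})^{-1}\,\partial\tilde{p}_{\btheta}(\by_{1:T})/\partial\btheta$, interchange differentiation and integration, apply the log-derivative identity $\partial\tilde{p}_{\btheta}/\partial\btheta = \tilde{p}_{\btheta}\cdot\partial\log\tilde{p}_{\btheta}/\partial\btheta$, and use the additive structure of $\log\tilde{p}_{\btheta}(\bx_{0:T},\by_{1:T})$ to split the resulting conditional expectation into the three terms. The paper simply asserts the interchange step without the dominated-convergence bookkeeping you provide; note, however, that your envelope argument draws on (A2)--(A4), whereas the lemma as stated assumes only (A1).
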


\begin{proof}
From the product chain rule for differentiation, we have
\begin{align*}
\frac{\partial \tilde{p}_{\btheta}(\bx_{0:T}, \by_{1:T})}{\partial \btheta} &=  \tilde{p}_{\btheta}(\bx_{0:T}, \by_{1:T})\frac{\partial \log \tilde{p}_{\btheta}(\bx_{0:T}, \by_{1:T})}{\partial \btheta}\\
&=  \tilde{p}_{\btheta}(\bx_{0:T}, \by_{1:T}) \Bigg\{ \frac{\partial}{\partial \btheta}\log\tilde{h}_{\btheta}(\bx_0) + \frac{\partial}{\partial \btheta}\sumtT\log\tilde{f}_{\btheta}(\bx_{t-1},\bx_t)\\
&\phantom{\mathrel{=} \tilde{p}_{\btheta}(\bx_{0:T}, \by_{1:T})\Bigg\{} + \frac{\partial}{\partial \btheta}\sumtT\log\tilde{g}_{\btheta}(\bx_t,\by_t) \Bigg\}.
\end{align*}
Given the definition of the conditional robustified density $\tilde{p}_{\btheta}(\bx_{0:T} | \by_{1:T})$, by exchanging the order of integration and differentiation we obtain
\begin{align*}
\tilde{\bs}(\btheta;\by_{1:T}) &= \frac{\partial}{\partial \btheta} \log\tilde{p}_{\btheta}(\by_{1:T})\nonumber\\
&= \big(\tilde{p}_{\btheta}(\by_{1:T})\big)^{-1}\frac{\partial}{\partial \btheta}\int_{\mathsf{X}^{(T+1)}} \tilde{p}_{\btheta}(\bx_{0:T}, \by_{1:T}) \diffd \bx_{0:T}\nonumber\\
&= \int_{\mathsf{X}^{(T+1)}} \frac{\partial \tilde{p}_{\btheta}(\bx_{0:T}, \by_{1:T})}{\partial \btheta} \frac{\tilde{p}_{\btheta}(\bx_{0:T} | \by_{1:T})}{\tilde{p}_{\btheta}(\bx_{0:T}, \by_{1:T})} \diffd \bx_{0:T}\nonumber\\
&= \int_{\mathsf{X}^{(T+1)}} \frac{\partial \log \tilde{p}_{\btheta}(\bx_{0:T}, \by_{1:T})}{\partial \btheta} \tilde{p}_{\btheta}(\bx_{0:T} | \by_{1:T}) \diffd \bx_{0:T}\nonumber\\
&= \tilde{\Expec}_{\btheta}\left[ \chi(\bx_0,\btheta) | \by_{1:T}\right] + \sumtT\tilde{\Expec}_{\btheta}\left[\alpha(\bx_{t-1},\bx_t,\btheta) | \by_{1:T}\right] + \sumtT \tilde{\Expec}_{\btheta}\left[\beta(\bx_t,\by_t,\btheta) | \by_{1:T}\right].
\end{align*}
\end{proof}

Lemma~\ref{l:Step0FisherId} shows that the uncorrected robustified marginal score can be written as a sum of random variables. We now turn to showing that each summand is approximately stationary. The following lemma introduces a necessary conditioning on an initial and an end point.

\begin{lemma}\label{l:Step1InhomogenousMarkovChain}
Under \textnormal{(A1)}, the conditional robustified density $\tilde{p}_{\btheta}(\bx_{m:n} | \bx_{m-1},\bx_{n+1}, \by_{m:n})$ corresponds to an inhomogeneous Markov chain for $0<m<n<T$.
\end{lemma}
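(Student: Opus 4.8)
The plan is to reduce the conditional density to a product of nearest-neighbor factors and then read off the Markov property directly from that factorization. First I would express the conditional as the ratio of the marginal joint density of $(\bx_{m-1:n+1},\by_{m:n})$ to that of $(\bx_{m-1},\bx_{n+1},\by_{m:n})$, both obtained from $\tilde{p}_{\btheta}(\bx_{0:T},\by_{1:T})$ by integrating out the states and observations carrying indices outside the window. Grouping the factors of $\tilde{p}_{\btheta}$ into those with indices $\le m-1$, those with indices in $\{m,\ldots,n\}$, and those with indices $\ge n+1$, the marginalization over $\bx_{0:m-2},\by_{1:m-1}$ collapses the first block into a function $A(\bx_{m-1})$ of the left endpoint alone, while marginalizing over $\bx_{n+2:T},\by_{n+1:T}$ collapses the third block into $\tilde{f}_{\btheta}(\bx_n,\bx_{n+1})$ times a function $C(\bx_{n+1})$ of the right endpoint alone. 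Since neither $A(\bx_{m-1})$ nor $C(\bx_{n+1})$ depends on $\bx_{m:n}$, they cancel in the conditional, leaving
\[
\tilde{p}_{\btheta}(\bx_{m:n}\mid\bx_{m-1},\bx_{n+1},\by_{m:n}) = \frac{1}{Z}\,\tilde{f}_{\btheta}(\bx_n,\bx_{n+1})\prod_{t=m}^{n}\tilde{f}_{\btheta}(\bx_{t-1},\bx_t)\,\tilde{g}_{\btheta}(\bx_t,\by_t),
\]
where $Z=Z(\bx_{m-1},\bx_{n+1},\by_{m:n})$ is the normalizing constant depending only on the conditioning quantities.

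Next I would extract the Markov structure from this nearest-neighbor factorization. For each $t$ with $m\le t\le n$ I would integrate out the future states $\bx_{t+1:n}$ to define a backward message $B_t(\bx_t)$ gathering $\tilde{f}_{\btheta}(\bx_n,\bx_{n+1})\prod_{s=t+1}^{n}\tilde{f}_{\btheta}(\bx_{s-1},\bx_s)\tilde{g}_{\btheta}(\bx_s,\by_s)$, which by the display above depends on $\bx_{m:t}$ only through $\bx_t$. Writing the marginal of $\bx_{m:t}$ as the product of the factors up to index $t$ times $B_t(\bx_t)$, the one-step conditional $\tilde{p}_{\btheta}(\bx_t\mid\bx_{m:t-1},\bx_{m-1},\bx_{n+1},\by_{m:n})$ is proportional to $\tilde{f}_{\btheta}(\bx_{t-1},\bx_t)\,\tilde{g}_{\btheta}(\bx_t,\by_t)\,B_t(\bx_t)$ and hence depends on the past $\bx_{m:t-1}$ only through the immediate predecessor $\bx_{t-1}$. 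This is exactly the Markov property, and because the transition kernel depends on $t$ through $\by_t$ and through $B_t$ the chain is inhomogeneous.

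Finally I would verify that every object above is well defined under \ref{A:rho}. Since $\rho_c$ maps $\mathbb{R}$ into $\mathbb{R}$ and we exponentiate, $\tilde{f}_{\btheta},\tilde{g}_{\btheta},\tilde{h}_{\btheta}$ are strictly positive wherever the nominal densities are, so the conditional densities and the ratios defining the transition kernels are legitimate; and \ref{A:rho}\ref{rhocond3} guarantees that the full joint integrates finitely, which I would combine with Fubini to ensure that the marginalizations producing $A$, $C$ and $B_t$, together with $Z$, are all finite. The main obstacle is precisely this bookkeeping: one must confirm that integrating out the two outer blocks really leaves only functions of the respective endpoints (so that they cancel) and that the backward messages are finite and genuinely independent of the distant past. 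Note that the compactness of \ref{A:compact}--\ref{A:compacttheta} is not needed at this stage, so all integrability must be drawn from \ref{A:rho}\ref{rhocond3} together with the positivity of the robustified kernels.
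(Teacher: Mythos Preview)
Your proposal is correct and follows essentially the same route as the paper: both arguments first isolate the nearest-neighbor factorization $\tilde{p}_{\btheta}(\bx_{m:n}\mid\bx_{m-1},\bx_{n+1},\by_{m:n})\propto \tilde{f}_{\btheta}(\bx_n,\bx_{n+1})\prod_{t=m}^{n}\tilde{f}_{\btheta}(\bx_{t-1},\bx_t)\tilde{g}_{\btheta}(\bx_t,\by_t)$ and then read off the one-step transitions, with your backward messages $B_t$ being exactly the numerator integrals the paper writes out explicitly for its forward transition densities. The only notable difference is that the paper also records the backward transition densities (used in the subsequent forgetting lemma), whereas you spend more effort on the integrability bookkeeping under \ref{A:rho}\ref{rhocond3}; neither affects the correctness or the substance of the argument.
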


\begin{proof}
The conditional robustified density of $\bx_{m:n}$ given $\bx_{0:(m-1)}$, $\bx_{(n+1):T}$ and $\by_{1:T}$ is proportional to $\tilde{f}_{\btheta}(\bx_{n},\bx_{n+1})\prod_{t=m}^n \tilde{f}_{\btheta}(\bx_{t-1},\bx_t)\tilde{g}_{\btheta}(\bx_t,\by_t)$ since the other terms in $\tilde{p}_{\btheta}(\bx_{0:T},\by_{0:T})$ do not contain $\bx_{n:m}$. As such it is independent of $\bx_{0:(m-1)}$, $\bx_{(n+2):T}$, $\by_{1:(m-1)}$ and $\by_{(n+1):T}$. In other words, thanks to the Markov property and conditional independence, it is sufficient to condition on contiguous $\bx$'s and contemporaneous $\by$'s. We thus denote this conditional robustified density by $\tilde{p}_{\btheta}(\bx_{m:n} | \bx_{m-1},\bx_{n+1}, \by_{m:n})$. Moreover, for $m \leq k \leq n$
\begin{align}\label{eq:forwarddensity}
\tilde{p}_{\btheta}&(\bx_k | \bx_{(m-1):(k-1)}, \bx_{n+1}, \by_{m:n})\nonumber\\
&= \int \tilde{p}_{\btheta}(\bx_{k:n} | \bx_{(m-1):(k-1)}, \bx_{n+1}, \by_{m:n}) \diffd \bx_{(k+1):n}\nonumber\\
&= \tilde{f}_{\btheta}(\bx_{k-1},\bx_k)\tilde{g}_{\btheta}(\bx_k,\by_k)\frac{\int \tilde{f}_{\btheta}(\bx_n,\bx_{n+1})\prod_{t=k+1}^n \tilde{f}_{\btheta}(\bx_{t-1},\bx_t) \tilde{g}_{\btheta}(\bx_t,\by_t) \diffd \bx_{(k+1):n}}{\int \tilde{f}_{\btheta}(\bx_n,\bx_{n+1})\prod_{t=k}^n \tilde{f}_{\btheta}(\bx_{t-1},\bx_t) \tilde{g}_{\btheta}(\bx_t,\by_t) \diffd \bx_{k:n}}.
\end{align}
Equation~(\ref{eq:forwarddensity}) provides the forward transition densities of an inhomogeneous Markov chain for $k>m$. These densities are independent of $\bx_{m-1}$ apart from the initial density which is given for $k=m$. Similarly we can compute the backward transition densities:
\begin{align}\label{eq:backwarddensity}
\tilde{p}_{\btheta}(\bx_k | \bx_{(k+1):(n+1)}, &\bx_{m-1}, \by_{m:n})\nonumber\\
&= \int \tilde{p}_{\btheta}(\bx_{m:k} | \bx_{(k+1):(n+1)}, \bx_{m-1}, \by_{m:n}) \diffd \bx_{m:(k-1)}\nonumber\\
&= \frac{\tilde{f}_{\btheta}(\bx_k,\bx_{k+1}) \int \prod_{t=m}^k \tilde{f}_{\btheta}(\bx_{t-1},\bx_t) \tilde{g}_{\btheta}(\bx_t,\by_t) \diffd \bx_{m:(k-1)}}{\int \tilde{f}_{\btheta}(\bx_k,\bx_{k+1})\prod_{t=m}^k \tilde{f}_{\btheta}(\bx_{t-1},\bx_t) \tilde{g}_{\btheta}(\bx_t,\by_t) \diffd \bx_{m:k}}.
\end{align}
\end{proof}

We now show that the inhomogeneous Markov chain in Lemma~\ref{l:Step1InhomogenousMarkovChain} ``forgets'' the values at both initial and end points as the lags increase.

\begin{lemma}\label{l:Step2ForgetTVnorm}
Under \textnormal{(A1)}--\textnormal{(A3)}, for $m \leq k \leq n$ and any $\bx_{m-1}$, $\bx_{n+1}$, and $\by_{m:n}$,
$$ \left\Vert\tilde{p}_{\btheta}(\bx_k | \bx_{m-1},\bx_{n+1}, \by_{m:n}) - \tilde{p}_{\btheta}(\bx_k | \bx_{m-1}', \bx_{n+1}', \by_{m:n}) \right\Vert_{\textnormal{TV}} \leq 2\max(\tau^{k-m+1},\tau^{n-k+1}),$$
where $||\cdot||_{\textnormal{TV}}$ denotes the total variation norm and $\tau = 1 - \tilde{\sigma}_{-}/\tilde{\sigma}_{+}$, where
$$\tilde{\sigma}_{-} = \inf_{\btheta \in \bTheta}\inf_{\bx, \bx' \in \mathsf{X}} \tilde{f}_{\btheta}(\bx,\bx') \text{ and } \tilde{\sigma}_{+} = \sup_{\btheta \in \bTheta}\sup_{\bx, \bx' \in \mathsf{X}} \tilde{f}_{\btheta}(\bx,\bx').$$
\end{lemma}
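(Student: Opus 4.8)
The plan is to recognize this as a Doeblin-type mixing result and to establish it through a Dobrushin contraction argument applied separately to the forward chain (\ref{eq:forwarddensity}) and the backward chain (\ref{eq:backwarddensity}). First I would record that Assumption \ref{A:poscont}, together with the fact that $\rho_c$ is finite-valued and increasing on all of $\mathbb{R}$, forces $0 < \tilde{\sigma}_- \leq \tilde{\sigma}_+ < \infty$: since $\tilde{f}_{\btheta} = \exp[\rho_c\{\log f_{\btheta}\}]$ is an increasing function of $f_{\btheta}$, it inherits two-sided bounds from $\sigma_- \leq f_{\btheta} \leq \sigma_+$. Consequently $\tau = 1 - \tilde{\sigma}_-/\tilde{\sigma}_+ \in [0,1)$, which is exactly what makes geometric decay possible.

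The key step is a minorization (Doeblin) condition for the one-step forward kernel in (\ref{eq:forwarddensity}). For fixed $\bx_{k-1}$, that transition density, viewed as a function of $\bx_k$, is proportional to $\tilde{f}_{\btheta}(\bx_{k-1},\bx_k)\,\mu(\bx_k)$, where $\mu(\bx_k) = \tilde{g}_{\btheta}(\bx_k,\by_k)$ times the forward integral, a factor that does \emph{not} depend on $\bx_{k-1}$. Bounding the numerator below by $\tilde{\sigma}_-\mu(\bx_k)$ and the normalizing denominator above by $\tilde{\sigma}_+\int\mu$, I obtain the minorization $q(\bx_k \mid \bx_{k-1}) \geq (\tilde{\sigma}_-/\tilde{\sigma}_+)\,\nu(\bx_k)$, where $\nu$ is the normalization of $\mu$ (a proper density, using compactness of $\mathsf{X}$ from \ref{A:compact} and $0<g_{\btheta}<\infty$ to guarantee integrability). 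The standard residual decomposition $K(\bx,\cdot)=(\tilde{\sigma}_-/\tilde{\sigma}_+)\nu + \tau R_{\bx}$ then bounds the Dobrushin coefficient, $\sup_{\bx,\bx'}\Vert q(\cdot\mid\bx)-q(\cdot\mid\bx')\Vert_{\textnormal{TV}} \leq \tau$; the identical argument applied to (\ref{eq:backwarddensity}) gives the same contraction for each backward step.

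I would then split the perturbation with the triangle inequality, moving the initial endpoint first and the terminal endpoint second. Writing $P(\bx_{m-1},\bx_{n+1})$ as shorthand for $\tilde{p}_{\btheta}(\bx_k \mid \bx_{m-1},\bx_{n+1},\by_{m:n})$,
\begin{align*}
\Vert P(\bx_{m-1},\bx_{n+1}) - P(\bx_{m-1}',\bx_{n+1}')\Vert_{\textnormal{TV}}
&\leq \Vert P(\bx_{m-1},\bx_{n+1}) - P(\bx_{m-1}',\bx_{n+1})\Vert_{\textnormal{TV}}\\
&\quad + \Vert P(\bx_{m-1}',\bx_{n+1}) - P(\bx_{m-1}',\bx_{n+1}')\Vert_{\textnormal{TV}}.
\end{align*}
The first term is controlled by noting (from the remark following Lemma~\ref{l:Step1InhomogenousMarkovChain}) that $\bx_{m-1}$ enters the forward chain only through the initial density at $k=m$; that single perturbation is contracted by $\leq\tau$ via the minorization, and composing the $k-m$ subsequent forward transitions, each a $\tau$-contraction and submultiplicative on total variation, multiplies this by $\tau^{k-m}$, for a total of $\tau^{k-m+1}$. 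Symmetrically, $\bx_{n+1}$ enters only through the backward initial density at $k=n$ and is contracted through $n-k$ backward steps, giving $\tau^{n-k+1}$. Adding and using $a+b\leq 2\max(a,b)$ delivers the stated bound.

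The main obstacle is the exponent bookkeeping in the last paragraph: one must be precise that the dependence on each endpoint is funneled entirely through a single boundary density, so that exactly one extra factor of $\tau$ appears beyond the number of intermediate transitions (yielding $k-m+1$ rather than $k-m$, and likewise $n-k+1$), and that the Dobrushin coefficients compose submultiplicatively across the \emph{inhomogeneous} steps. The minorization itself is routine once the factorization $\tilde{f}_{\btheta}(\bx_{k-1},\bx_k)\,\mu(\bx_k)$ is isolated; the real care lies in the triangle-inequality split that produces the factor $2$ and the maximum.
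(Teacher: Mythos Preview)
Your proposal is correct and follows essentially the same route as the paper: establish the uniform bounds $0<\tilde{\sigma}_-\leq\tilde{f}_{\btheta}\leq\tilde{\sigma}_+<\infty$, derive the one-step Doeblin minorization for the forward and backward kernels of Lemma~\ref{l:Step1InhomogenousMarkovChain}, obtain geometric contraction with rate $\tau$ in each direction, and combine via the triangle inequality to produce the factor $2\max(\cdot,\cdot)$. The only cosmetic difference is that the paper cites \citet{lindvall1992} for the contraction step whereas you spell out the Dobrushin argument explicitly; your exponent bookkeeping ($k-m+1$ and $n-k+1$) matches the paper's.
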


\begin{proof}
Since $\mathsf{X}$ is compact by (A2) and $f_{\btheta}$ is uniformly bounded under (A3), the mapping defining $\tilde{f}_{\btheta}$ is uniformly continuous under (A1) so that uniform boundedness is preserved:
$$ 0 < \tilde{\sigma}_{-} \leq \tilde{f}_{\btheta}(\bx,\bx') \leq \tilde{\sigma}_{+} < \infty.$$
It follows from Lemma~\ref{l:Step1InhomogenousMarkovChain} that
\begin{align*}
\tilde{p}_{\btheta}(\bx_k | \bx_{k-1}, \bx_{n+1}, \by_{k:n}) \geq \frac{\tilde{\sigma}_{-}}{\tilde{\sigma}_{+}} h_k(\bx_k | \bx_{n+1}, \by_{k:n}),
\end{align*}
where
\begin{align*}
h_k &(\bx_k | \bx_{n+1}, \by_{k:n})\\
&= \frac{
\tilde{g}_{\btheta}(\bx_k,\by_k) 
\int \tilde{f}_{\btheta}(\bx_n,\bx_{n+1}) \prod_{t=k+1}^n \tilde{f}_{\btheta}(\bx_{t-1},\bx_t) \tilde{g}_{\btheta}(\bx_t,\by_t) \diffd \bx_{(k+1):n}}{\int \prod_{t=k}^n \tilde{f}_{\btheta}(\bx_{t-1},\bx_t) \tilde{g}_{\btheta}(\bx_t,\by_t) \diffd \bx_{k:n}}
\end{align*}
is a probability density that does not depend on $\bx_{k-1}$. Note that this result, along with Lemma~\ref{l:Step1InhomogenousMarkovChain}, can be seen as an extension of Lemma~1 of \citet{douc2004}. Hence, by standard results for uniformly minorized Markov chains \citep[see e.g.][Chapter III]{lindvall1992}, it follows that 
$$ \left\Vert \tilde{p}_{\btheta}(\bx_k | \bx_{m-1}, \bx_{n+1}, \by_{m:n}) - 
\tilde{p}_{\btheta}(\bx_k | \bx_{m-1}', \bx_{n+1}, \by_{m:n}) \right\Vert_{\text{TV}} \leq \tau^{k-m+1}. $$
Similarly,
$$ \tilde{p}_{\btheta}(\bx_k | \bx_{m-1}, \bx_{k+1}, \by_{m:k}) \geq
\frac{\tilde{\sigma}_{-}}{\tilde{\sigma}_{+}} \bar{h}_k(\bx_k | \bx_{m-1}, \by_{m:k}),$$
where $\bar{h}_k(\bx_k | \bx_{m-1}, \by_{m:k})$ is another probability density which does not depend on $\bx_{k+1}$. This implies that
$$\left\Vert\tilde{p}_{\btheta}(\bx_k | \bx_{m-1}, \bx_{n+1}, \by_{m:n}) - \tilde{p}_{\btheta}(\bx_k | \bx_{m-1}, \bx'_{n+1}, \by_{m:n})\right\Vert_{\text{TV}} \leq \tau^{n-k+1}.$$
Hence by combining these two results, we obtain for $m \leq k \leq n$
$$\left\Vert\tilde{p}_{\btheta}(\bx_k | \bx_{m-1}, \bx_{n+1}, \by_{m:n}) - \tilde{p}_{\btheta}(\bx_k | \bx_{m-1}', \bx_{n+1}', \by_{m:n})\right\Vert_{\text{TV}} \leq 2\max(\tau^{k-m+1},\tau^{n-k+1}).$$
\end{proof}

This geometric rate of decay for the total variation distance between two conditional densities differing by their initial and end points means in particular that the influence of the initial states $\bX_0$ vanishes as $k \rightarrow \infty$.

We now show that $\tilde{\Expec}_{\btheta}\left[\beta(\bx_t,\by_t,\btheta) | \by_{1:T}\right]$ can be approximated by a stationary process, with similar arguments applying to $\tilde{\Expec}_{\btheta}\left[\alpha(\bx_{t-1},\bx_t,\btheta) | \by_{1:T}\right]$. For $s>0$ and an arbitrary (albeit fixed) value $\bx^\star \in \mathsf{X}$, we define the following stochastic process:
\begin{align*}
\xi_{k,s} &= \xi_{s,\btheta,\bx^\star}(\by_{(k-s):(k+s)})\\
&= \int \beta(\bx_k,\by_k,\btheta)
\tilde{p}_{\btheta}(\bx_k | \bx_{k-s-1}=\bx_{k+s+1}=\bx^\star, \by_{(k-s):(k+s)}) \diffd \bx_k.
\end{align*}
Note that $\xi_{k,s}$ is defined for any $k \in \mathbb{Z}$ and any (doubly-infinite) $\by \in \mathsf{Y}^{\mathbb{Z}}$ because the conditional robustified density $\tilde{p}_{\btheta}(\bx_{n:m} | \bx_{m-1},\bx_{n+1},\by_{n:m})$ is defined for any $n<m$, i.e.\ without the restrictions $0<m$ and $n<T$, by the formula at the beginning of the proof of Lemma~\ref{l:Step1InhomogenousMarkovChain}.

\begin{lemma}\label{l:Step3ApproxStProc}
Under \textnormal{(A1)}--\textnormal{(A3)}, the process $\xi_{k,s}$ is stationary and ergodic if $\bY$ is stationary and ergodic, and 
$$ \sup_{\by_{1:T} \in \mathsf{Y}^T}\left|\sumtT \tilde{\Expec}_{\btheta}\left[\beta(\bx_t,\by_t,\btheta) | \by_{1:T}\right] - \sum_{k=s}^{T-s} \xi_{k,s}\right| \leq 2 (s+T\tau^{s+1}) \sup_{\bx \in \mathsf{X}}\sup_{\by \in \mathsf{Y}}\sup_{\btheta \in \bTheta} |\beta(\bx, \by, \btheta)|.$$
Moreover, for any $u>s$ and any $k \in \mathbb{Z}$
\begin{equation*}
\sup_{\by_{k-u,k+u}} |\xi_{k,s} - \xi_{k,u}| \leq 2 \tau^{s+1} \sup_{\bx \in \mathsf{X}}\sup_{\by \in \mathsf{Y}}\sup_{\btheta \in \bTheta} | \beta(\bx, \by, \btheta)|.
\end{equation*}
\end{lemma}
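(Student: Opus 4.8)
The plan is to handle the three assertions in turn, spending most of the effort on the approximation bound, since stationarity/ergodicity and the forgetting bound are corollaries of the machinery already built in Lemmas~\ref{l:Step1InhomogenousMarkovChain} and~\ref{l:Step2ForgetTVnorm}. First I would dispose of stationarity and ergodicity. By construction $\xi_{k,s} = \xi_{s,\btheta,\bx^\star}(\by_{(k-s):(k+s)})$ is one fixed measurable map — the same for every $k$ — applied to the length-$(2s+1)$ window $(\bY_{k-s},\ldots,\bY_{k+s})$ of the observation process; measurability of the map follows because $\beta$ is measurable and the windowed conditional density $\tilde{p}_{\btheta}(\bx_k\,|\,\bx_{k-s-1}=\bx_{k+s+1}=\bx^\star,\by_{(k-s):(k+s)})$ is a ratio of integrals of continuous integrands, via the explicit formula in Lemma~\ref{l:Step1InhomogenousMarkovChain}. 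Because this map commutes with the time shift, $(\xi_{k,s})_{k\in\mathbb{Z}}$ is a factor of $\bY$, and a factor of a stationary and ergodic process is stationary and ergodic; this yields the first claim. Boundedness, $|\xi_{k,s}|\leq \sup_{\bx,\by,\btheta}|\beta|<\infty$, is inherited from \ref{A:poscont} (which bounds $\partial/\partial\btheta\,\log g_{\btheta}$) together with $\rho'_c\in[0,1]$ from \ref{A:rho}.

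For the forgetting bound (the last display) I would compare the two windowed densities through a mixture device built on the Markov property. Conditioning further on the separating pair $(\bx_{k-s-1},\bx_{k+s+1})$ — which is interior to the width-$u$ window precisely because $u>s$ — splits the robustified product into three blocks, so by Lemma~\ref{l:Step1InhomogenousMarkovChain} the density underlying $\xi_{k,u}$ is a mixture, over the conditional law of $(\bx_{k-s-1},\bx_{k+s+1})$, of the inner conditional $\tilde{p}_{\btheta}(\bx_k\,|\,\bx_{k-s-1},\bx_{k+s+1},\by_{(k-s):(k+s)})$; the density underlying $\xi_{k,s}$ is this same inner conditional with both endpoints pinned at $\bx^\star$. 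Applying Lemma~\ref{l:Step2ForgetTVnorm} with $m=k-s$, $n=k+s$ bounds the total variation distance between the inner conditional at arbitrary endpoints and at $(\bx^\star,\bx^\star)$ by $2\max(\tau^{s+1},\tau^{s+1})=2\tau^{s+1}$, and this bound passes through the mixture by convexity of the total variation norm. Integrating the bounded function $\beta$ against the resulting signed density then gives $|\xi_{k,s}-\xi_{k,u}|\leq 2\tau^{s+1}\sup_{\bx,\by,\btheta}|\beta|$.

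The approximation bound (the middle display) is where I expect the real work, and it reuses the same two ingredients plus a boundary count. I would match $\tilde{\Expec}_{\btheta}[\beta(\bx_t,\by_t,\btheta)\,|\,\by_{1:T}]$ with $\xi_{t,s}$ for interior indices $t$: for such $t$ the full smoothing density $\tilde{p}_{\btheta}(\bx_t\,|\,\by_{1:T})$ is, by Lemma~\ref{l:Step1InhomogenousMarkovChain}, a mixture over the conditional law of $(\bx_{t-s-1},\bx_{t+s+1})$ of exactly the inner conditional appearing in $\xi_{t,s}$, so the argument of the previous paragraph bounds $|\tilde{\Expec}_{\btheta}[\beta(\bx_t,\by_t,\btheta)\,|\,\by_{1:T}]-\xi_{t,s}|$ by $2\tau^{s+1}\sup_{\bx,\by,\btheta}|\beta|$; summing over at most $T$ interior terms yields the $2T\tau^{s+1}\sup|\beta|$ contribution. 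The remaining $O(s)$ indices near $t=1$ and $t=T$, where the window $[t-s,t+s]$ runs outside $[1,T]$ and the mixture representation is unavailable on one side, I would bound crudely, using $|\tilde{\Expec}_{\btheta}[\beta(\bx_t,\by_t,\btheta)\,|\,\by_{1:T}]|\leq\sup|\beta|$ and $|\xi_{t,s}|\leq\sup|\beta|$; these account for the $2s\sup|\beta|$ contribution. Careful bookkeeping of which indices fall in each group then delivers the stated constant $2(s+T\tau^{s+1})\sup_{\bx,\by,\btheta}|\beta|$.

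The main obstacle is the interior estimate: recognizing that the full-sample smoother, the end-pinned window of width $u$, and the end-pinned window of width $s$ are all mixtures of one common inner conditional density over different laws of the separating pair $(\bx_{t-s-1},\bx_{t+s+1})$, so that Lemma~\ref{l:Step2ForgetTVnorm} need be invoked only once and can then be transported through each mixture by convexity of total variation. The delicate points are verifying that the factorization at the separating states in Lemma~\ref{l:Step1InhomogenousMarkovChain} genuinely renders the inner conditional independent of the data outside $[t-s,t+s]$, and correctly treating the boundary indices where no separating state exists on one side.
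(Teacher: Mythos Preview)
Your proposal is correct and follows essentially the same route as the paper's proof: write the full smoothing density $\tilde{p}_{\btheta}(\bx_k\,|\,\by_{1:T})$ (or the width-$u$ windowed density) as a mixture over the law of the separating pair $(\bx_{k-s-1},\bx_{k+s+1})$ of the inner conditional from Lemma~\ref{l:Step1InhomogenousMarkovChain}, apply Lemma~\ref{l:Step2ForgetTVnorm} with $m=k-s$, $n=k+s$ to get the uniform $2\tau^{s+1}$ total variation bound, integrate $\beta$ against the resulting difference, and then sum, crudely handling the $O(s)$ boundary indices. The paper carries this out more tersely (in particular it just says the second claim ``follows similarly''), whereas you spell out the mixture/convexity step and the boundary count explicitly; the arguments are the same.
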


\begin{proof}
First, for any $s>0$, $\xi_{k,s}$ is stationary and ergodic for any stationary and ergodic process $(\bY_{k})_{k \in \mathbb{Z}}$ because the sequence subsetting and integration in the definition of $\xi_{k,s}$ above preserve such properties. We note that, by Lemma~1 of \citet{leroux1992}, under (A3) $(\bY_{k})_{k \in \mathbb{Z}}$ is stationary and ergodic under the assumed (nominal) SSM represented by $\Plaw_{\btheta}$, for all $\btheta \in \bTheta$. Next, for any $0 < m \leq k \leq n< T$ it holds
\begin{align*}
\tilde{p}_{\btheta}(\bx_k | \by_{1:T}) &=
\int \tilde{p}_{\btheta}(\bx_k | \bx_{m-1}, \bx_{n+1}, \by_{1:T}) \tilde{p}_{\btheta}(\bx_{m-1}, \bx_{n+1} | \by_{1:T}) \diffd \bx_{m-1} \diffd \bx_{n+1}\\
&= \int \tilde{p}_{\btheta}(\bx_k | \bx_{m-1}, \bx_{n+1}, \by_{n:m}) \tilde{p}_{\btheta}(\bx_{m-1}, \bx_{n+1} | \by_{1:T}) \diffd \bx_{m-1} \diffd \bx_{n+1}.
\end{align*}
Lemma~\ref{l:Step2ForgetTVnorm} implies that for any $s>0$, any $s \leq k \leq T-s$ and some fixed $\bx^\star \in \mathsf{X}$
$$ \left\Vert \tilde{p}_{\btheta}(\bx_k | \by_{1:T}) - \tilde{p}_{\btheta}(\bx_k | \bx_{k-s-1} = \bx_{k+s+1} = \bx^\star , \by_{(k-s):(k+s)})\right\Vert_{\text{TV}} \leq 2 \tau^{s+1}.$$
It then follows that
$$ \sup_{\by_{1:T} \in \mathsf{Y}^T} \left\vert \tilde{\Expec}_{\btheta}\left[\beta(\bx_k,\by_k,\btheta) | \by_{1:T}\right] - \xi_{k,s} \right\vert \leq 2 \tau^{s+1} \sup_{\bx \in \mathsf{X}}\sup_{\by \in \mathsf{Y}}\sup_{\btheta \in \bTheta} | \beta(\bx, \by, \btheta)|,$$
which implies the first claim after summation. Note that the supremum of $\beta(\bx, \by, \btheta)$ is guaranteed to be finite under (A1) and (A3). The second claim follows similarly.
\end{proof}

By Lemma~\ref{l:Step3ApproxStProc}, $\xi_k = \lim_{s\rightarrow\infty} \xi_{k,s}$ exists since $(\xi_{k,s})_{s \in \mathbb{N}}$ is a Cauchy sequence and
\begin{equation*}
\sup_{\by_{-\infty:\infty}} |\xi_{k,s} - \xi_k| \leq 2 \tau^{s+1} \sup_{\bx \in \mathsf{X}}\sup_{\by \in \mathsf{Y}}\sup_{\btheta \in \bTheta} | \beta(\bx, \by, \btheta)|.
\end{equation*}
Lemma~\ref{l:Step3ApproxStProc} also shows that $\tilde{\Expec}_{\btheta}\left[\beta(\bx_t,\by_t,\btheta) | \by_{1:T}\right]$ is approximately stationary and ergodic which allows us to invoke Birkhoff's ergodic theorem to obtain
$$ \liminf_{T \rightarrow \infty} \frac{1}{T} \sumtT \tilde{\Expec}_{\btheta}\left[\beta(\bx_t,\by_t,\btheta) | \by_{1:T}\right] \geq \Expec[\xi_{1,s}] - 2 \tau^{s+1} \sup_{\bx \in \mathsf{X}}\sup_{\by \in \mathsf{Y}}\sup_{\btheta \in \bTheta} | \beta(\bx, \by, \btheta)|, \quad \text{a.s.,}$$ 
where the expectation on the right-hand side is with respect to the distribution of any stationary and ergodic $\bY$ process. In particular, it applies to the distribution of $\bY$ under the assumed SSM. Combined with the analog inequality for the limit superior, it follows that the almost sure limit of $\frac{1}{T} \sumtT \tilde{\Expec}_{\btheta}\left[\beta(\bx_t,\by_t,\btheta) | \by_{1:T}\right]$ for $T \rightarrow \infty$ exists and is equal to $\Expec[\xi_1]$. Similar arguments apply to $\alpha(\bx_{t-1},\bx_t,\btheta)$ so that we can conclude that under (A1)--(A4) $T^{-1}\tilde{\bs}(\btheta;\by_{1:T})$ converges to a non-random function as $T \rightarrow \infty$. This non-random function is the expectation $\Expec[\bzeta(\btheta;\bY_0,\bY_{\pm 1},\bY_{\pm 2},\ldots)]$ taken under the invariant distribution of some (doubly-infinite) stationary and ergodic $\bY$ process, where $\bzeta: \bTheta \times \mathsf{Y}^{\mathbb{Z}} \rightarrow \mathbb{R}^p$ is a bounded measurable function.

Finally, we turn to the corrected robustified score. Under (A1)--(A4), the convergence of the correction term $b_T(\btheta)$ follows from the results above and Birkhoff's ergodic theorem:
$$ \lim_{T \rightarrow \infty} T^{-1} b_T(\btheta) = \lim_{T \rightarrow \infty} T^{-1} \Expec_{\btheta} \left[ \tilde{\bs}(\btheta;\by_{1:T}) \right]  =  \Expec_{\btheta}[\bzeta(\btheta;\bY_0,\bY_{\pm 1},\bY_{\pm 2},\ldots)].$$ 
If we set
\begin{equation*}
\bzeta^\star(\btheta;\bY_0,\bY_{\pm 1},\bY_{\pm 2},\ldots) = \bzeta(\btheta;\bY_0,\bY_{\pm 1},\bY_{\pm 2},\ldots) - \lim_{T\rightarrow\infty}T^{-1}b_T(\btheta),
\end{equation*}
the (normalized) corrected robustified score $T^{-1}\tilde{\bs}^\star(\btheta;\by_{1:T}) = T^{-1}\big(\tilde{\bs}(\btheta;\by_{1:T}) - b_T(\btheta)\big)$ thus converges a.s.\ to
$$\Expec[\bzeta^\star(\btheta;\bY_0,\bY_{\pm 1},\bY_{\pm 2},\ldots)] = \int \bzeta^\star(\btheta;\by_0,\by_{\pm 1},\by_{\pm 2},\ldots) \, \Plaw(\text{d} \by)$$
which is bounded for any stationary and ergodic $\bY$. This applies in particular to the marginal distribution of $\bY$ under the assumed SSM so that the corrected robust estimator admits the statistical functional representation $S^\star(\Plaw)$ defined as the solution in $\btheta$ to $\int \bzeta^\star(\btheta;\by_0,\by_{\pm 1},\by_{\pm 2},\ldots) \, \Plaw(\text{d} \by) = \boldsymbol{0}$. The latter satisfies $S^\star(\Plaw_{\btheta}) = \btheta$ for all $\btheta \in \bTheta$ by construction, hence the robust estimator is Fisher consistent.

\section*{Appendix S2: Proof of Theorem~2}

The proof of Theorem~2 in the main body of the paper amounts to verifying the conditions of Theorem~4.2 of \citet{martin1986}. Our setting is slightly different from theirs in that our $\bzeta^\star$ function depends on both past and future observations ($\bY$ is a doubly infinite sequence, see e.g.\ the definition of $\xi_{k,s}$ in the proof of Theorem~1 above). But apart from minor notation changes this does not impact their results and proof.

Conditions (a) and (b) in Theorem~4.2 of \citet[][p.~792]{martin1986} are directly equivalent to our (A5) and (A6), respectively. Conditions (c), (d) and (e) in Theorem~4.2 of \citet{martin1986} follow from the definition of $\bzeta^\star$, see in particular the properties of $\xi_{k,s}$ derived in Lemma~\ref{l:Step2ForgetTVnorm} and \ref{l:Step3ApproxStProc} above. The expression for the IF given in Theorem~2 then follows from Equation~(4.2), (4.2') and (4.6) of \citet{martin1986}.

In the case of patchy outliers with arbitrary but fixed patch length $k$, the existence of $\lim_{\epsilon\downarrow 0} \Expec_{\Plaw_{\btheta}^{\epsilon}}[\bzeta^\star(\btheta;\bY_0^\epsilon,\bY^\epsilon_{\pm 1},\bY^\epsilon_{\pm 2},\ldots)]/\epsilon$ is guaranteed by Theorem~4.2 of \citet{martin1986}. By their Equation~(4.8) and (4.8') we obtain the particular representation of the IF:
$$ \textnormal{IF}(\Plaw_{\bW},S^\star,\Plaw_{\btheta}) = - \frac{1}{k}\bM(\btheta)^{-1} \sum_{j=-\infty}^{\infty} \Expec\left[ \bzeta^\star(\btheta;\bY_{-\infty:(j-k)}^0, \bW_{(j-k+1):j} , \bY_{(j+1):\infty}^0) \right],$$
where the expectation is taken under the joint distribution of $\bY$ and $\bW$ (recall that $\bW$ is assumed stationary and ergodic). Because $\bzeta^\star$ is bounded, it is thus clear that the IF is itself bounded as long as $k$ is fixed.

\section*{Appendix S3: Stationary Version of the North Sea Pollock Assessment Model}

We describe here the modifications made to the dynamics of the unobserved states to create a stationary version of the North Sea pollock stock assessment model used in the simulation study of Section~4 and for the real data example in Section~5. The observation equations for $C_{a,t}$ and $I_{a,t}$ remain unchanged. We stress that these modifications are made solely to define a stationary model and we acknowledge that the resulting dynamics may not be as realistic as the original ones from a marine ecology point of view.

Recall that the vector of unobserved states at year $t$ is
\begin{align*}
\bX_t &= (\log F_{3,t},\ldots,\log F_{9+,t},\log N_{3,t}, \ldots, \log N_{10+,t})^\top,
\end{align*}
with dimension $q=15$, where $F_{a,t}$ is the unitless fishing mortality (instantaneous) rate and $N_{a,t}$ denotes the abundance of fish as counts. The vector of observations at year $t$ is
\begin{align*}
\bY_t &= (\log C_{3,t},\ldots,\log C_{10+,t},\log I_{3,t},\ldots,\log I_{8+,t})^\top,
\end{align*}
with dimension $r=14$, where $C_{a,t}$ represents the total commercial catches (as counts) and $I_{a,t}$ is a unitless index of relative abundance coming from surveys with a standardized fishing effort.

The fishing mortality $F_{a,t}$ now follows an AR(1) on the log scale with age-specific stationary mean:
\begin{align*}
\left[\begin{array}{c} \log F_{3,t}\\ \vdots \\ \log F_{9+,t}  \end{array}\right] &= \left[\begin{array}{c} (1-\phi_F) \lambda_{a=3} + \phi_F \log F_{3,t-1}\\ \vdots \\ (1-\phi_F)\lambda_{a=9+} + \phi_F\log F_{9+,t-1}  \end{array}\right] + \bxi_t, \quad \forall t=1,\ldots,T,
\end{align*}
where $\lambda_a$ is the stationary invariant expectation of $\log F_{a,t}$ as $t\rightarrow\infty$ for $a=3,\ldots,9+$, the constraint $|\phi_F|<1$ is enforced in the estimation, and the $\bxi_t$s have the same multivariate normal distribution with covariance $\mathbf{\Sigma}$. The initial state distribution at $t=1$ corresponds to the stationary multivariate Gaussian with expectation vector $\blambda = (\lambda_{a=3}, \ldots, \lambda_{a=9+})^\top$ and covariance $\mathbf{\Sigma}/(1-\phi_F^2)$. The stationary means $\blambda$ and AR coefficient $\phi_F$ are additional parameters to estimate.

Regarding the true abundance $N_{a,t}$, for the youngest age class $a=3$ we specify an AR(1) on the log scale with stationary mean $\gamma$:
\begin{align*}
\log N_{3,t} &= (1-\phi_R)\gamma + \phi_R \log N_{3,t-1} + \eta_{3,t},
\end{align*}
where the $\eta_{3,t}$s again are i.i.d.\ N$(0,\sigma^2_R)$. The initial state is given by the stationary Gaussian distribution with mean $\gamma$ and variance $\sigma^2_R/(1-\phi_R^2)$, where $\gamma$ and $\phi_R$ are additional parameters to estimate.

For $4\leq a \leq 9$, we have the modified survival equation
\begin{align}\label{eq:survival49}
\log N_{a,t} &= \phi_N \big(\log N_{a-1,t-1} - F_{a-1,t-1} - M_{a-1,t-1}\big) + \eta_{a,t},
\end{align}
where $\phi_N \in (-1,+1)$ tunes the magnitude of the stationary means of $\log N_{a,t}$ across age classes based on the value of $\gamma$, and $\eta_{a,t}$ are again i.i.d.\ N$(0,\sigma^2_N)$. The stationary distribution of this process does not admit a tractable density and moments because of the stochasticity in $F_{a-1,t-1}$, resulting in a linear combination of independent Gaussian and non-independent log-normally distributed random variables. We thus assume a Gaussian distribution with matching moments as an approximation to the stationary distribution for the initial states, see the next section for details.

Finally, for the largest age class $a=10+$ we add the abundance numbers on the log scale so as to preserve some linearity and consider autoregressive dynamics:
\begin{align}\label{eq:survival10}
\log N_{10+,t} &= \phi_N \big(\log N_{9,t-1} - F_{9+,t-1} - M_{9,t-1}\big)\nonumber\\
&\phantom{\mathrel{=}}\, + \phi_P \big(\log N_{10+,t-1} - F_{9+,t-1} - M_{10+,t-1}\big)  + \eta_{10+,t},
\end{align}
where $\phi_P$ is constrained within $(-1,+1)$ and the $\eta_{10+,t}$s are i.i.d.\ N$(0,\sigma^2_P)$. The stationary distribution is again not tractable and we consider an approximate Gaussian distribution for the initial states, see the next section.

With these modified states dynamics, the parameter vector to estimate now is
\begin{align*}
\btheta = (&\lambda_{a=3}, \ldots, \lambda_{a=9+}, \phi_F, \sigma_{F_{a=3}}, \sigma_{F_{a\geq4}}, \rho, \gamma,\\
&\phi_R, \sigma_R, \phi_N, \sigma_N, \phi_P, \sigma_P, \sigma_C, q_{a=3}, \ldots, q_{a=8+}, \sigma_I)^\top
\end{align*}
with $p=26$.

\subsection*{Derivation and approximation of stationary distributions}

The modified survival equations (\ref{eq:survival49}) and (\ref{eq:survival10}) above lead to stationary dynamics yet the invariant distributions and some moments seem intractable. We give here some details about the derivation and the need for some approximations.

For the intermediate age classes $4 \leq a \leq 9$, the autoregressive recursions and limit as $t\rightarrow\infty$ of $\log N_{3,t}$ lead to
\begin{align}
\log N_{a,t} &= \phi_N^{a-3}\phi_R^{t-a+3}\log N_{3,0} + \phi_N^{a-3}\gamma + \phi_N^{a-3}\sum_{j=0}^{t-1}\phi_R^j\eta_{3,t-j-a+3}\nonumber \\
&\phantom{\mathrel{=}}\; + \sum_{j=0}^{a-4}\phi_N^j \eta_{a-j,t-j} - \sum_{j=1}^{a-3} \phi_N^j (F_{a-j,t-j} + M_{a-j,t-j}),
\end{align}
with the limit being
\begin{align}\label{eq:logNat}
\lim_{t\rightarrow\infty}\log N_{a,t} &= \phi_N^{a-3}\gamma + \lim_{t\rightarrow\infty} \phi_N^{a-3}\sum_{j=0}^{t-1}\phi_R^j\eta_{3,t-j-a+3}\nonumber \\
&\phantom{\mathrel{=}}\; + \lim_{t\rightarrow\infty}\sum_{j=0}^{a-4}\phi_N^j \eta_{a-j,t-j} - \lim_{t\rightarrow\infty}\sum_{j=1}^{a-3} \phi_N^j (F_{a-j,t-j} + M_{a-j,t-j}).
\end{align}
Making use of the fact that the natural mortality is assumed here a fixed covariate constant through time ($M_{a,t} = M_a$ $\forall t$), the expectation taken over the joint distribution including $\bF=(F_3,\ldots,F_9)^\top$ is thus
$$ \Expec[\log N_{a}] = \phi_N^{a-3}\gamma - \sum_{j=1}^{a-3} \phi_N^j (\Expec[F_{a-j}] + M_{a-j}) $$
and the variance is
\begin{align*}
\Var[\log N_{a}] &= \phi_N^{2(a-3)}\frac{\sigma^2_R}{1-\phi_R^2} + \sigma^2_N\frac{1-\phi_N^{2(a-3)}}{1-\phi_N^2}\\
&\phantom{\mathrel{=}}\; + \lim_{t\rightarrow\infty} \sum_{j=1}^{a-3}\sum_{i=1}^{a-3}\phi_N^j\phi_N^i\Cov[F_{a-j,t-j},F_{a-i,t-i}]\\
&= \phi_N^{2(a-3)}\frac{\sigma^2_R}{1-\phi_R^2} + \sigma^2_N\frac{1-\phi_N^{2(a-3)}}{1-\phi_N^2}\\
&\phantom{\mathrel{=}}\; + \sum_{j=1}^{a-3}\sum_{i=1}^{a-3}\phi_N^j\phi_N^i \exp\left(\lambda_{a-j}+\lambda_{a-i}+\frac{\sigma_{F_{a-j}}^2+\sigma_{F_{a-i}}^2}{2(1-\phi_F^2)}\right)\\
&\phantom{\mathrel{=}\sum_{j=1}^{a-3}\sum_{i=1}^{a-3}}\quad\times\left(\exp\left(\phi_F^{|i-j|}\rho^{|i-j|}\frac{\sigma_{F_{a-j}}\sigma_{F_{a-i}}}{1-\phi_F^2}\right) - 1\right),
\end{align*}
where the last term is derived from
\begin{align*}
\lim_{t\rightarrow\infty}\Cov[\log F_{a-i,t-i},\log F_{a-j,t-j}] &= \phi_F^{|i-j|}\Cov[\log F_{a-j}, \log F_{a-i}]\\
&= \phi_F^{|i-j|} \mathbf{\Sigma}_{a-i,a-j}/(1-\phi_F^2).
\end{align*}

In Equation~(\ref{eq:logNat}), a linear combination of non-independent log-normal and independent Gaussian random variables does not admit a distribution with an analytical density. Therefore the stationary distribution of $N_{a,t}$, and in particular its expectation, cannot easily be derived. A simple way to approximate it is by assuming $\log N_{a,t}$ is Gaussian, in effect neglecting the stochasticity in $\bF$, so that $N_{a,t}$ is approximated by a log-normal with matching mean and variance. We thus obtain the approximated expectation
\begin{align*}
\Expec[N_{a}] &\approxeq \exp\big(\Expec[\log N_a]+\Var[\log N_a]/2\big).
\end{align*}
Simulations (not presented here) have shown that the bias due to this approximation is negligible given the simulation design in the main body of the paper, with samples of varying size up to $T=10\,000$ generated under the stationary model.

For the oldest age class $a=10$ (dropping the $10+$ notation here for legibility), the fact that $a=9$ is a plus group for the fishing mortalities and thus that $F_{a=9} = F_{a=10}$, the autoregressive recursions lead to
\begin{align*}
\log N_{10,t} &= \phi_P^t\log N_{10,0} + \phi_N \sum_{j=0}^{t-1} \phi_P^j \left( \log N_{10-1,t-j-1} - F_{10-1,t-j-1} - M_{10-1,t-j-1}\right)\nonumber \\
&\phantom{\mathrel{=}}\; - \sum_{j=1}^t \phi_P^j (F_{10,t-j} + M_{10,t-j}) + \sum_{j=0}^{t-1} \phi_P^j \eta_{10,t-j}.
\end{align*}
Taking the limit yields
\begin{align}\label{eq:logNAt}
\lim_{t\rightarrow\infty}\log N_{10,t} &= \lim_{t\rightarrow\infty} \sum_{j=0}^{t-1} \phi_P^j \eta_{10,t-j} - \lim_{t\rightarrow\infty} \sum_{j=1}^t \phi_P^j (F_{10,t-j} + M_{10,t-j})\nonumber\\
&\phantom{\mathrel{=}}\; + \lim_{t\rightarrow\infty} \phi_N \sum_{j=0}^{t-1} \phi_P^j \left( \log N_{10-1,t-j-1} - F_{10-1,t-j-1} - M_{10-1,t-j-1}\right)\nonumber\\
&= \lim_{t\rightarrow\infty} \sum_{j=0}^{t-1} \phi_P^j \eta_{10,t-j} - \lim_{t\rightarrow\infty} \sum_{j=1}^t \phi_P^j (F_{10,t-j} + M_{10,t-j})\nonumber\\
&\phantom{\mathrel{=}}\; + \lim_{t\rightarrow\infty} \sum_{j=0}^{t-1} \phi_P^j \bigg[ \phi_N^{10-3}\gamma + \phi_N^{10-3}\sum_{k=0}^{t-j-2}\phi_R^k\eta_{3,t-j-k-10+3}\nonumber\\
&\phantom{\mathrel{=}\; \lim_{t\rightarrow\infty} \sum_{j=0}^{t-1} \phi_P^j\bigg[\;}\;\, + \sum_{k=1}^{10-4}\phi_N^k \eta_{10-k,t-j-k} \nonumber\\
&\phantom{\mathrel{=}\; \lim_{t\rightarrow\infty} \sum_{j=0}^{t-1} \phi_P^j\bigg[\;}\;\, - \sum_{k=1}^{10-3} \phi_N^k (F_{10-k,t-j-k} + M_{10-k,t-j-k}) \bigg].
\end{align}
The stationary expectation is thus
\begin{align*}
\Expec[\log N_{10}] &= \frac{1}{1-\phi_P}\left(\phi_N^{10-3}\gamma -\phi_P(\Expec[F_{10}] + M_{10}) - \sum_{j=1}^{10-3} \phi_N^j (\Expec[F_{10-j}] + M_{10-j}) \right)
\end{align*}
and the variance is
\begin{align}\label{eq:VarlogNAt}
\Var[\log N_{10}] &= \frac{\sigma^2_P}{1-\phi_P^2} + \sigma_N^2\frac{\phi_N^2}{1-\phi_P^2}\frac{1-\phi_N^{2(10-4)}}{1-\phi_N^2}\nonumber\\
&\phantom{\mathrel{=}}\; + \sigma^2_R\frac{\phi_N^{2(10-3)}}{(\phi_R-\phi_P)^2}\left(\frac{\phi_R^2}{1-\phi_R^2} + \frac{\phi_P^2}{1-\phi_P^2} -2\frac{\phi_R\phi_P}{1-\phi_R\phi_P}\right)\nonumber\\
&\phantom{\mathrel{=}}\; + \lim_{t\rightarrow\infty} \Var\left[\sum_{j=1}^t \phi_P^j F_{10,t-j} + \sum_{j=0}^{t-1}\phi_P^j \sum_{k=1}^{10-3} \phi_N^k F_{10-k,t-j-k} \right]\nonumber\\
&= \frac{\sigma^2_P}{1-\phi_P^2} + \sigma_N^2\frac{\phi_N^2}{1-\phi_P^2}\frac{1-\phi_N^{2(10-4)}}{1-\phi_N^2}\nonumber\\
&\phantom{\mathrel{=}}\; + \sigma^2_R\frac{\phi_N^{2(10-3)}}{(\phi_R-\phi_P)^2}\left(\frac{\phi_R^2}{1-\phi_R^2} + \frac{\phi_P^2}{1-\phi_P^2} -2\frac{\phi_R\phi_P}{1-\phi_R\phi_P}\right)\nonumber\\
&\phantom{\mathrel{=}}\; + \lim_{t\rightarrow\infty}\sum_{j=1}^t\sum_{i=1}^t\phi_P^{i+j}\Cov[F_{10,t-j},F_{10,t-i}]\nonumber\\
&\phantom{\mathrel{=}}\; + \lim_{t\rightarrow\infty}\sum_{j=0}^{t-1}\sum_{i=0}^{t-1}\phi_P^{i+j} \sum_{k=1}^{10-3}\sum_{l=1}^{10-3}\phi_N^{k+l} \Cov[F_{10-k,t-j-k},F_{10-l,t-i-l}]\nonumber\\
&\phantom{\mathrel{=}}\; + 2\lim_{t\rightarrow\infty}\sum_{j=1}^{t}\sum_{i=0}^{t-1}\phi_P^{i+j} \sum_{k=1}^{10-3}\phi_N^k \Cov[F_{10,t-j},F_{10-k,t-i-k}],
\end{align}
where
\begin{align*}
\lim_{t\rightarrow\infty}\Cov[F_{a,t},F_{b,t-s}] &= \exp\left(\lambda_{a} + \lambda_{b} + \frac{\sigma^2_{F_{a}} + \sigma^2_{F_{b}}}{2(1-\phi_F^2)}\right)\left(\exp\left(\phi_F^{|s|}\rho^{|a-b|}\frac{\sigma_{F_{a}}\sigma_{F_{b}}}{1-\phi_F^2}\right)-1\right)
\end{align*}
based on the fact that both $\log F_{a,t}$ and $\log F_{b,t-s}$ are Gaussian and that
$$\lim_{t\rightarrow\infty} \Cov[\log F_{a,t},\log F_{b,t-s}] = \phi_F^{|s|}\mathbf{\Sigma}_{a,b}/(1-\phi_F^2),$$
for $3 \leq a,b \leq 10$ and for any integer $s$.

The last three terms of $\Var[\log N_{10}]$ in (\ref{eq:VarlogNAt}) involve infinite convergent sums for which no general closed form is readily available due to the indices appearing in the exponential function. Since the sums converge fairly quickly for all realistic values of $\phi_P$ and $\phi_N$, they can be approximated by truncating at some given $t$. Auxiliary simulations (not shown here) confirmed that truncating such sums at $t=20$ ensured negligible error given the simulation design in the main body of the paper.

Again, in (\ref{eq:logNAt}) $\log N_{10,t}$ appears to be a sum of non-independent log-normal and independent Gaussian random variables, the distribution for which no closed form exists to the best of our knowledge. We thus use the same approximation of considering $\log N_{10,t}$ as Gaussian so that the stationary distribution of $N_{10,t}$ can be approximated by a log-normal distribution with corresponding mean and variance. Hence, we obtain
\begin{align*}
\Expec[N_{10}] &\approxeq \exp\big(\Expec[\log N_{10}]+\Var[\log N_{10}]/2\big).
\end{align*}

\FloatBarrier
\newpage 
\section*{Appendix S4: Simulation Study Tables and Figures}

\begin{table}[H]
\caption{Simulation true parameter $\btheta_0$ and starting values for the MLE, from both non-stationary and stationary versions of the model. \label{tab:simparam}}
\begin{center}
\begin{tabular}{rrrcrr}
\hline & \multicolumn{2}{c}{Non-stationary model} & \hphantom{A} & \multicolumn{2}{c}{Stationary model}\\
 & $\btheta_0$ & Starting value && $\btheta_0$ & Starting value\\
\hline
  $\lambda_{a=3}$       &         &      && -1.80    & -1.00 \\ 
  $\lambda_{a=4}$       &         &      && -1.10    & -1.00 \\ 
  $\lambda_{a=5}$       &         &      && -0.92    & -1.00 \\ 
  $\lambda_{a=6}$       &         &      && -0.86    & -1.00 \\ 
  $\lambda_{a=7}$       &         &      && -0.78    & -1.00 \\ 
  $\lambda_{a=8}$       &         &      && -0.63    & -1.00 \\ 
  $\lambda_{a=9+}$      &         &      && -0.28    & -1.00 \\ 
  $\phi_F$              &         &      &&  0.71    & 0.44$^\ast$ \\ 
  $\sigma_{F_{a=3}}$    & 0.20    & 1.00 &&  0.44    & 1.00 \\ 
  $\sigma_{F_{a\geq4}}$ & 0.16    & 1.00 &&  0.27    & 1.00 \\ 
  $\rho$                & 0.88    & 0.46$^\dag$ &&  0.77    & 0.46$^\dag$ \\ 
  $\gamma$              &         &      && 12.50    & 10.00 \\ 
  $\phi_R$              &         &      &&  0.39    & 0.44$^\ast$ \\ 
  $\sigma_R$            & 0.45    & 1.00 &&  0.44    & 1.00 \\ 
  $\phi_N$              &         &      &&  0.98    & 0.46$^\dag$ \\ 
  $\sigma_N$            & 0.17    & 1.00 &&  0.16    & 1.00 \\ 
  $\phi_P$              &         &      &&  0.20    & 0.00$^\ddag$ \\ 
  $\sigma_P$            & 0.22    & 1.00 &&  0.24    & 1.00 \\ 
  $\sigma_C$            & 0.22    & 1.00 &&  0.14    & 1.00 \\ 
  $q_{a=3}$             & 5.1e-05 & 6.7e-03$^\mathsection$ &&  2.6e-05 & 6.7e-03$^\mathsection$ \\
  $q_{a=4}$             & 8.4e-05 & 6.7e-03$^\mathsection$ &&  5.1e-05 & 6.7e-03$^\mathsection$ \\
  $q_{a=5}$             & 6.3e-05 & 6.7e-03$^\mathsection$ &&  4.4e-05 & 6.7e-03$^\mathsection$ \\
  $q_{a=6}$             & 4.2e-05 & 6.7e-03$^\mathsection$ &&  3.5e-05 & 6.7e-03$^\mathsection$ \\
  $q_{a=7}$             & 2.8e-05 & 6.7e-03$^\mathsection$ &&  3.0e-05 & 6.7e-03$^\mathsection$ \\
  $q_{a=8+}$            & 3.0e-05 & 6.7e-03$^\mathsection$ &&  4.1e-05 & 6.7e-03$^\mathsection$ \\
  $\sigma_I$            & 0.63    & 1.00 &&  0.61    & 1.00 \\ 
\hline
\multicolumn{6}{l}{$^\ast${\small 1.00 on the scale of the transformation $\log[(0.95 + \theta)/(0.95 - \theta)]$}}\\
\multicolumn{6}{l}{$^\dag${\small 1.00 on the scale of the transformation $\log[(1 + \theta)/(1 - \theta)]$}}\\
\multicolumn{6}{l}{$^\ddag${\small 0.00 on the scale of the transformation $\log[(0.95 + \theta)/(0.95 - \theta)]$}}\\
\multicolumn{6}{l}{$^\mathsection${\small -5.00 on the scale of the transformation $\log(\theta)$}}\\
\end{tabular}
\end{center}
\end{table}

\begin{figure}[H]
\begin{center}
\includegraphics[width=0.9\textwidth]{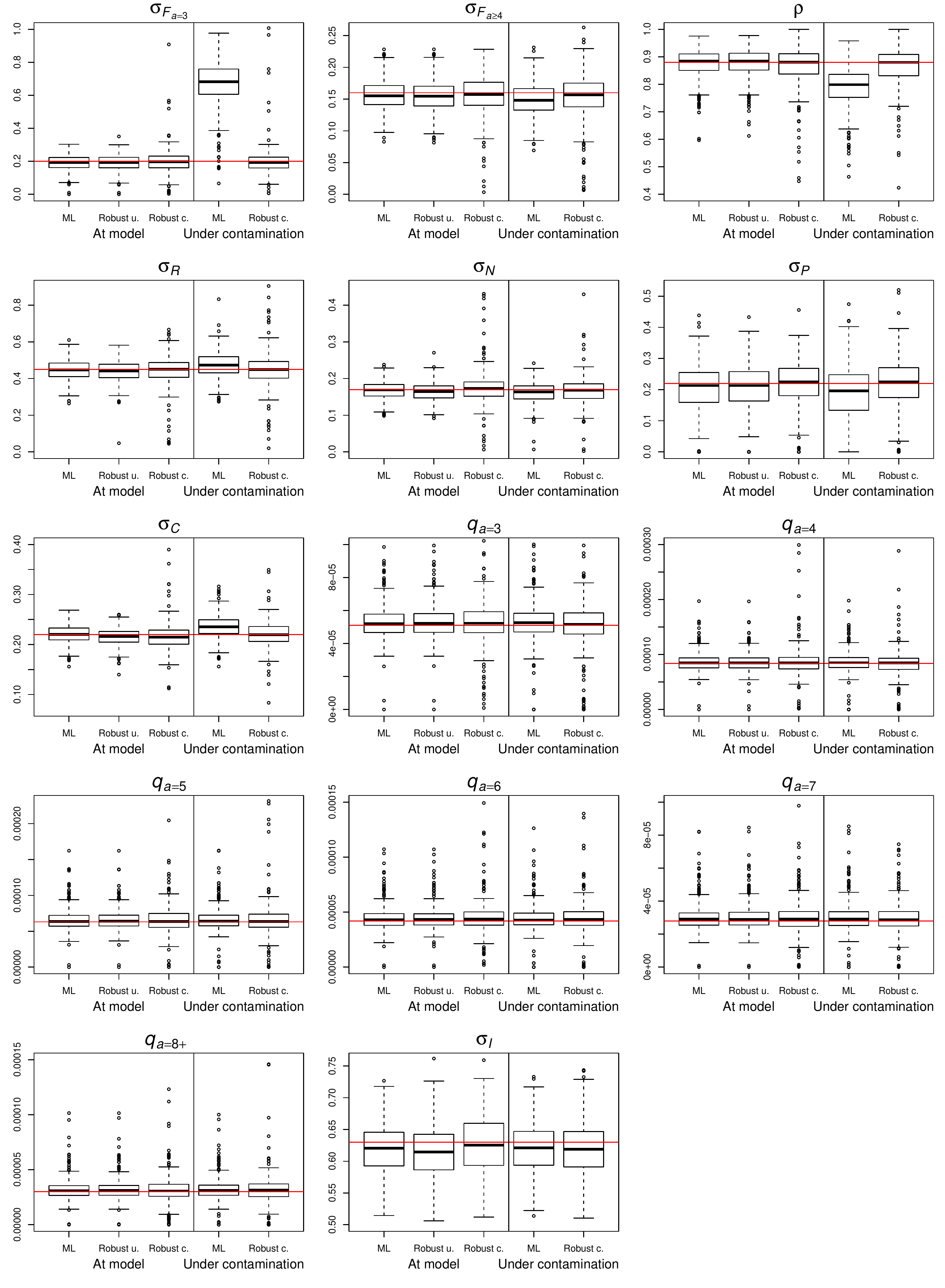}
\end{center}
\caption{Boxplots of ML and robust estimates for all $p=14$ parameters of the non-stationary version of the model, both at the assumed model and under contamination in each panel. ``Robust u.'' stands for the uncorrected robust estimator $\tilde{\btheta}_T^{[1]}$ from Step~1 while ``Robust c.'' is the corrected one from Step~3. The red horizontal solid line is the true parameter value. The vertical scale is manually set to improve visualization, some points (both ML and Robust) are not shown. \label{fig:boxplots_nst}}
\end{figure}

\begin{figure}[H]
\begin{center}
\includegraphics[width=0.9\textwidth]{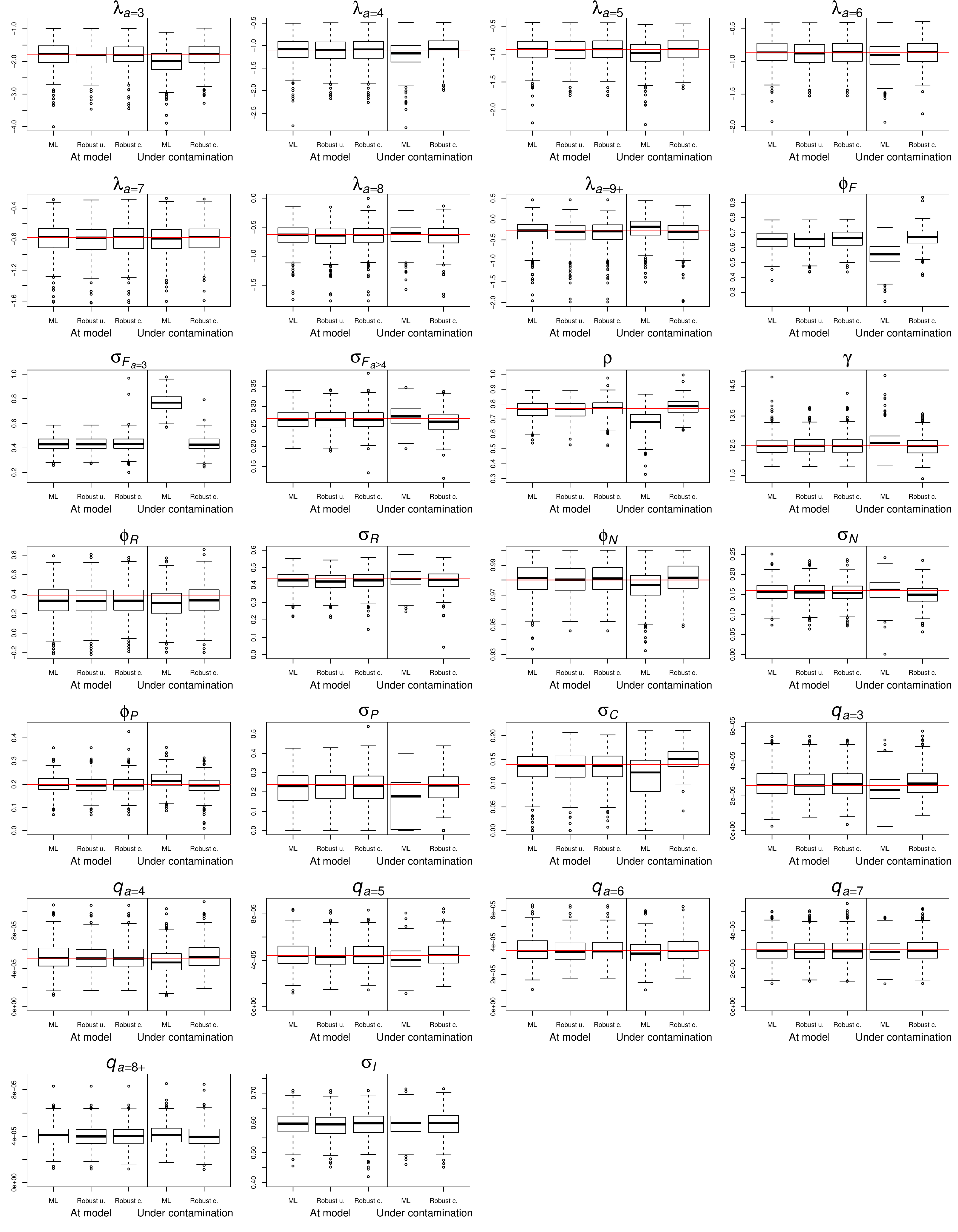}
\end{center}
\caption{Boxplots of ML and robust estimates for all $p=26$ parameters of the stationary version of the model, both at the assumed model and under contamination in each panel. ``Robust u.'' stands for the uncorrected robust estimator $\tilde{\btheta}_T^{[1]}$ from Step~1 while ``Robust c.'' is the corrected one from Step~3. The red horizontal solid line is the true parameter value. The vertical scale is manually set to improve visualization, some points (both ML and Robust) are not shown. \label{fig:boxplots_st}}
\end{figure}

\begin{figure}
\begin{center}
\includegraphics[width=\textwidth]{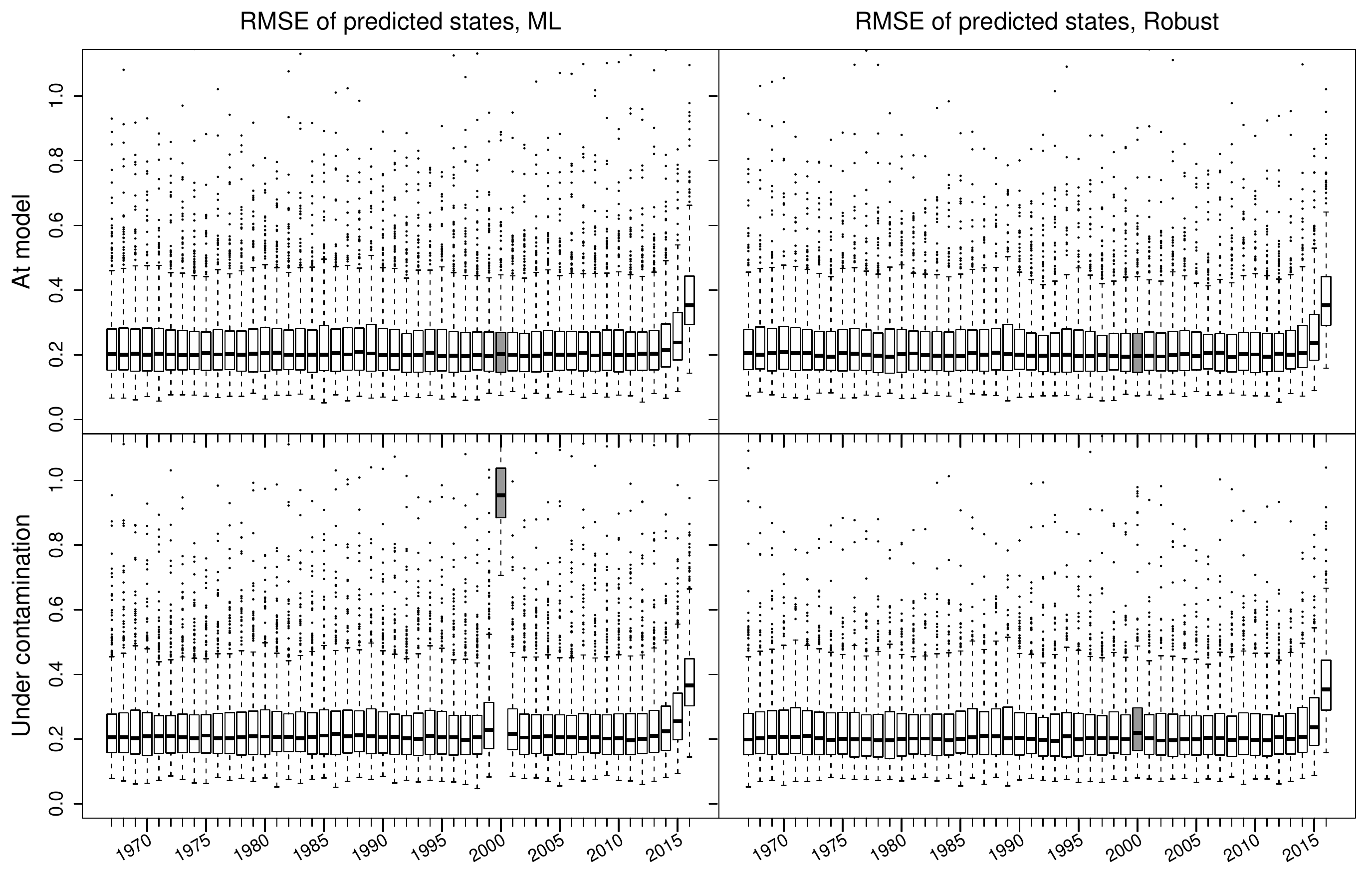} 
\end{center}
\caption{Boxplots of RMSE of predicted states $\bX_t$ for the stationary version of the model. The box corresponding to the contaminated year 2000 is colored in grey. The vertical scale is manually set to improve visualization, some points (both ML and Robust) are not shown. \label{fig:boxplotsRMSEst}}
\end{figure}

\begin{table}[H]
\caption{Quartiles of the robustness weights corresponding to the likelihood contributions of $\log C_{a=3,t}$, for three consecutive years, at the model and under contamination, and for both non-stationary and stationary versions of the model. \label{tab:simrobweights}}
\begin{center}
\begin{tabular}{lrp{1cm}p{1cm}p{1cm}cp{1cm}p{1cm}p{1cm}} 
\hline
& & \multicolumn{3}{c}{At the model} & \hphantom{~} & \multicolumn{3}{c}{Under contamination}\\
 && 1999 & 2000 & 2001 & & 1999 & 2000 & 2001\\
\hline
                &  Q1 & 1.00 & 1.00 & 1.00 && 1.00 & 0.01 & 1.00 \\ 
Non-stationary  &  Q2 & 1.00 & 1.00 & 1.00 && 1.00 & 0.01 & 1.00 \\ 
                &  Q3 & 1.00 & 1.00 & 1.00 && 1.00 & 0.01 & 1.00 \\[1em]
            &  Q1 & 1.00 & 1.00 & 1.00 && 1.00 & 0.00 & 1.00 \\ 
Stationary  &  Q2 & 1.00 & 1.00 & 1.00 && 1.00 & 0.00 & 1.00 \\ 
            &  Q3 & 1.00 & 1.00 & 1.00 && 1.00 & 0.00 & 1.00 \\ 
\hline
\end{tabular}
\end{center}
\end{table}

\FloatBarrier
\newpage 
\section*{Appendix S5: North Sea Pollock Assessment Tables and Figures}

\begin{figure}[H]
\begin{center}
\includegraphics[width=0.9\textwidth]{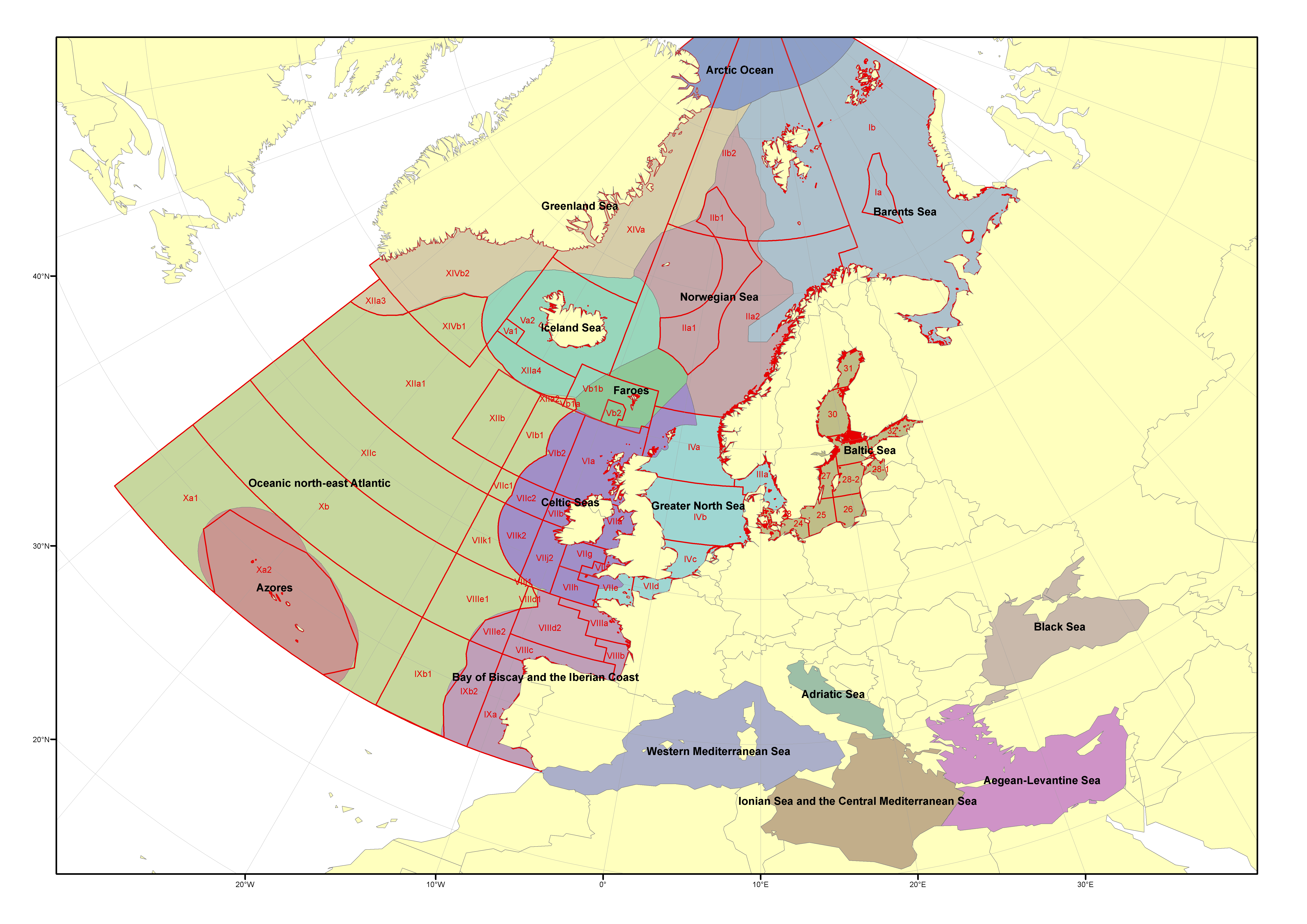}
\end{center}
\caption{Ecoregions and areas as defined by ICES. The North Sea pollock stock spans Division IIIa and Subareas IV and VI. \label{fig:ICESareas}}
\end{figure}

\begin{table}[H]
\caption{Computation times (in seconds) when fitting both versions of the model to the North Sea pollock data with a laptop housing a 2.9 GHz CPU. ``Robust u.'' stands for the uncorrected robust estimator $\tilde{\btheta}_T^{[1]}$ from Step~1 while ``Robust c.'' is the corrected one from Step~3.}
\begin{center}
\begin{tabular}{lrrr}
	\hline
	& ML & Robust u.\ & Robust c.\ \\ 
	\hline
	Non-stationary & $5.9 \times 10^{-1}$ & $7.2 \times 10^{-1}$ & $2.1 \times 10^{3}$ \\ 
	Stationary     & $5.9 \times 10^{0}$  & $4.1 \times 10^{0}$  & $1.1 \times 10^{4}$ \\ 
	\hline
\end{tabular}
\end{center}
\end{table}

\begin{table}[H]
\caption{North Sea pollock data ML and robust estimates of $\btheta$ (standard errors between parentheses) for both non-stationary and stationary versions of the model. \label{tab:dataparam}}
\begin{center}
\begin{tabular}{rrrcrr}
\hline
& \multicolumn{2}{c}{Non-stationary model} & \hphantom{A} & \multicolumn{2}{c}{Stationary model}\\
& ML & Robust && ML & Robust\\
\hline
$\lambda_{a=3}$         &                       &                       && -1.745 (0.347)        & -1.774 (0.868) \\ 
$\lambda_{a=4}$         &                       &                       && -1.090 (0.249)        & -1.111 (0.660) \\ 
$\lambda_{a=5}$         &                       &                       && -0.915 (0.207)        & -0.919 (0.496) \\ 
$\lambda_{a=6}$         &                       &                       && -0.844 (0.176)        & -0.861 (0.365) \\ 
$\lambda_{a=7}$         &                       &                       && -0.763 (0.162)        & -0.776 (0.248) \\ 
$\lambda_{a=8}$         &                       &                       && -0.618 (0.180)        & -0.625 (0.194) \\ 
$\lambda_{a=9+}$        &                       &                       && -0.270 (0.264)        & -0.276 (0.318) \\ 
$\phi_F$                &                       &                       && 0.721 (0.115)         & 0.714 (0.186) \\ 
$\sigma_{F_{a=3}}$      & 0.215 (0.037)         & 0.202 (0.040)         && 0.433 (0.091)         & 0.436 (0.110) \\ 
$\sigma_{F_{a\geq4}}$   & 0.165 (0.028)         & 0.159 (0.027)         && 0.268 (0.043)         & 0.270 (0.051) \\ 
$\rho$                  & 0.885 (0.052)         & 0.882 (0.053)         && 0.774 (0.059)         & 0.772 (0.073) \\ 
$\gamma$                &                       &                       && 12.455 (0.292)        & 12.465 (0.807) \\ 
$\phi_R$                &                       &                       && 0.387 (0.161)         & 0.390 (0.164) \\ 
$\sigma_R$              & 0.455 (0.062)         & 0.448 (0.076)         && 0.427 (0.053)         & 0.436 (0.055) \\ 
$\phi_N$                &                       &                       && 0.975 (0.010)         & 0.975 (0.028) \\ 
$\sigma_N$              & 0.158 (0.028)         & 0.170 (0.028)         && 0.162 (0.029)         & 0.162 (0.039) \\ 
$\phi_P$                &                       &                       && 0.200 (0.035)         & 0.201 (0.053) \\ 
$\sigma_P$              & 0.201 (0.068)         & 0.219 (0.065)         && 0.239 (0.101)         & 0.237 (0.120) \\ 
$\sigma_C$              & 0.226 (0.021)         & 0.216 (0.021)         && 0.147 (0.049)         & 0.141 (0.065) \\ 
$q_{a=3}$               & 5.14e-05 (7.16e-06)   & 5.14e-05 (7.37e-06)   && 2.62e-05 (9.57e-06)   & 2.58e-05 (2.58e-05) \\ 
$q_{a=4}$               & 8.40e-05 (1.16e-05)   & 8.42e-05 (1.17e-05)   && 5.16e-05 (1.51e-05)   & 5.12e-05 (3.96e-05) \\ 
$q_{a=5}$               & 6.29e-05 (8.82e-06)   & 6.29e-05 (8.90e-06)   && 4.47e-05 (1.10e-05)   & 4.45e-05 (2.72e-05) \\ 
$q_{a=6}$               & 4.20e-05 (6.13e-06)   & 4.22e-05 (6.27e-06)   && 3.52e-05 (7.37e-06)   & 3.52e-05 (1.64e-05) \\ 
$q_{a=7}$               & 2.81e-05 (4.38e-06)   & 2.84e-05 (4.58e-06)   && 2.89e-05 (5.40e-06)   & 2.95e-05 (9.89e-06) \\ 
$q_{a=8+}$              & 3.02e-05 (5.19e-06)   & 3.03e-05 (5.32e-06)   && 4.09e-05 (7.87e-06)   & 4.09e-05 (9.48e-06) \\ 
$\sigma_I$              & 0.623 (0.040)         & 0.623 (0.041)         && 0.616 (0.041)         & 0.615 (0.044) \\ 
\hline
\end{tabular}
\end{center}
\end{table}

\begin{figure}[H]
\begin{center}
\includegraphics[width=\textwidth]{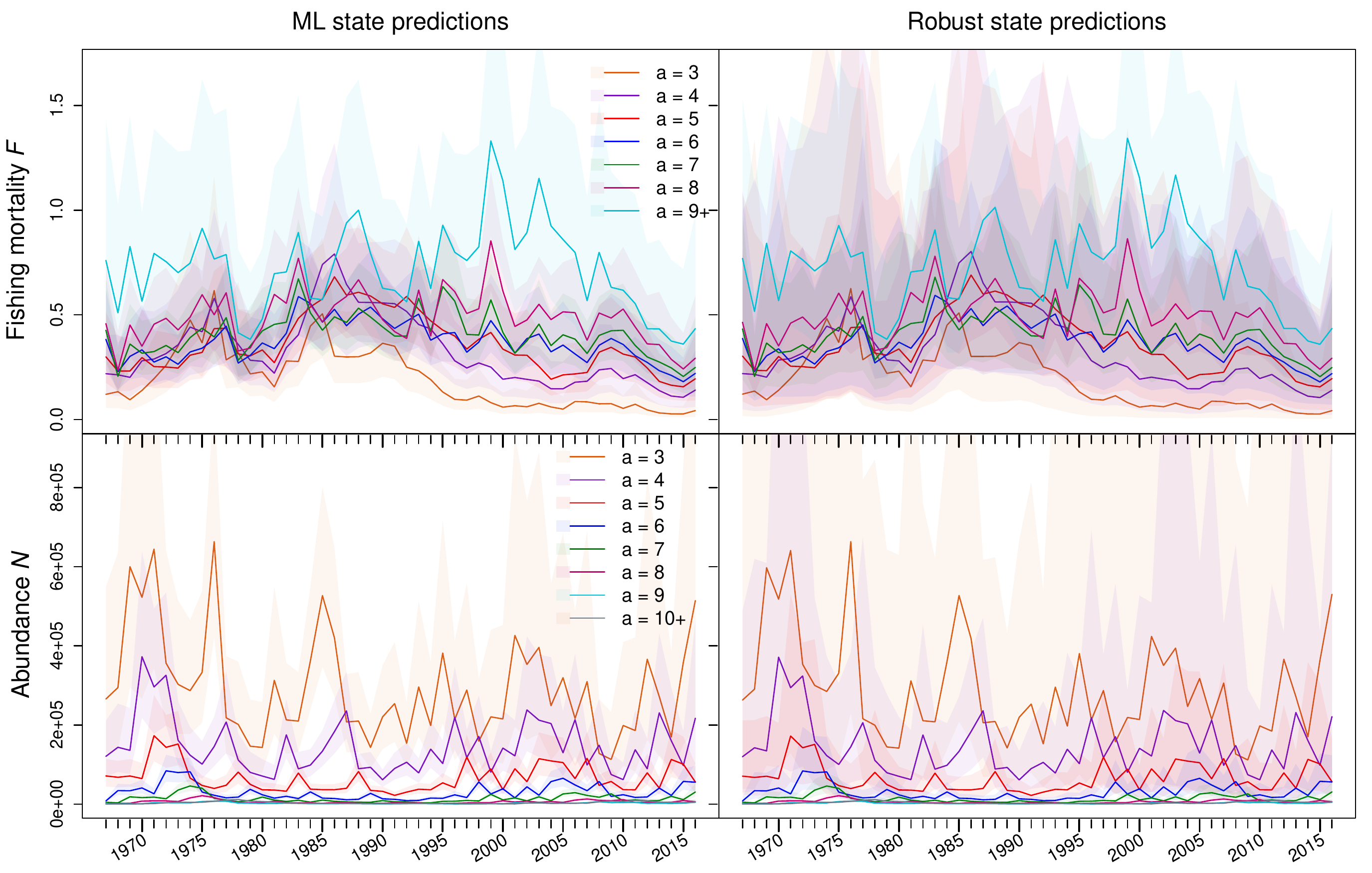} 
\end{center}
\caption{North Sea pollock data ML and robust predictions of $\bX$ (on the original scale) for the stationary version of the model. \label{fig:datapredst}}
\end{figure}

\begin{figure}[H]
\begin{center}
\includegraphics[width=0.9\textwidth]{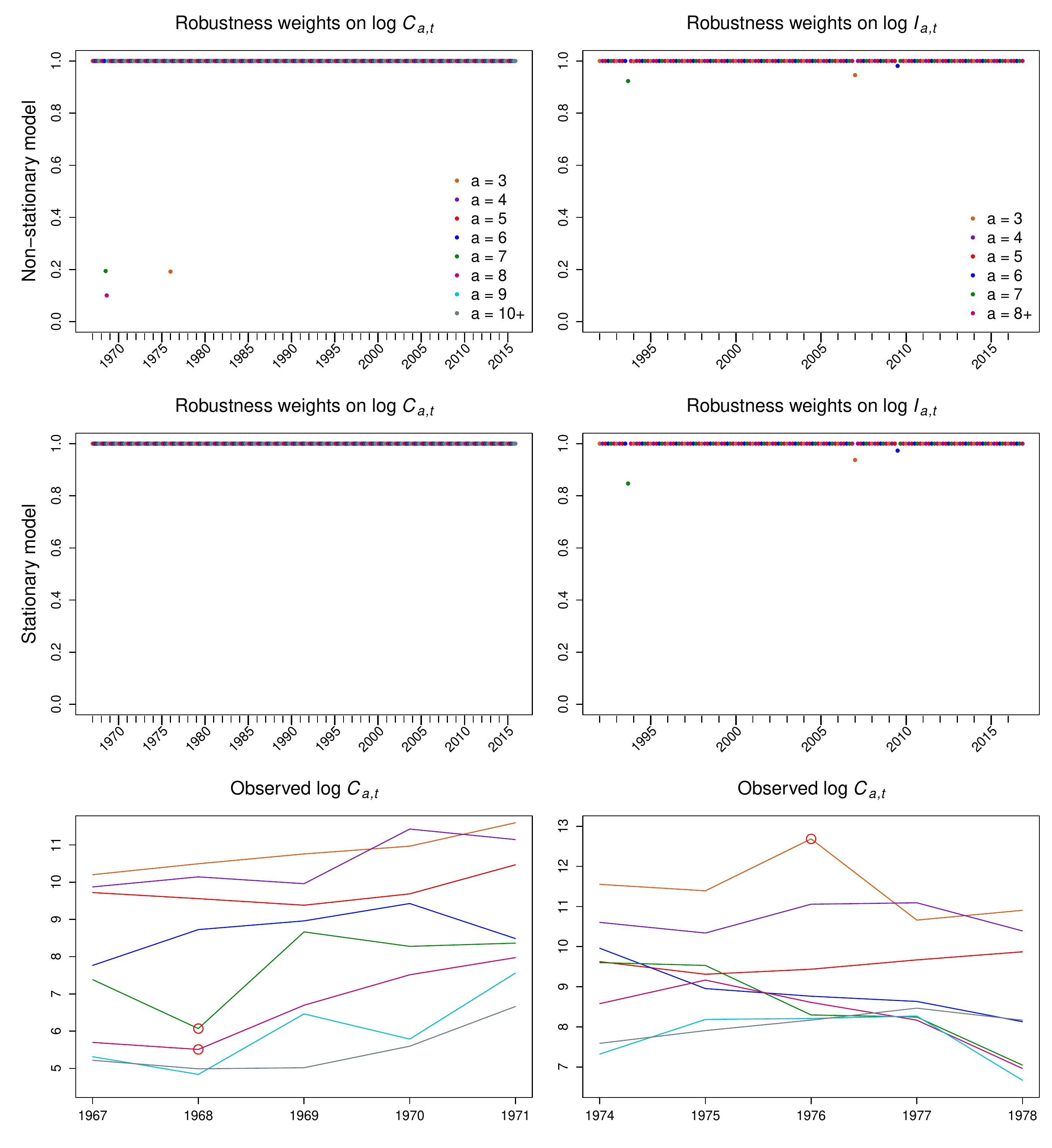}
\end{center}
\caption{North Sea pollock data robustness weights on the likelihood contributions of the commercial catches $\log C_{a,t}$ and survey indices $\log I_{a,t}$ for both non-stationary and stationary versions of the model (top and middle rows), and observed $\log C_{a,t}$ series where circled data points are the most downweighted under the non-stationary model (bottom row). \label{fig:dataweights}}
\end{figure}


\FloatBarrier
\newpage 




\bibliographystyle{plainnat} 
\bibliography{../../../GlobalRefs}

\end{document}